\newtheorem{theorem}{Theorem}[section]
\newtheorem{lemma}[theorem]{Lemma}
\newtheorem{protocol}[theorem]{Protocol}
\newtheorem{corollary}[theorem]{Corollary}
\newcommand{\ketb}[1]{\ket{\mathbf{#1}}}
\begin{document}
\title{Fusion for High-Dimensional Linear Optical Quantum Computing with Improved Success Probability}
\author{G\"ozde \"Ust\"un}
\affiliation{School of Electrical Engineering and Telecommunications, UNSW Sydney, Sydney,
NSW 2052, Australia}
\affiliation{ARC Centre of Excellence for Quantum Computation and Communication Technology, Melbourne, VIC, Australia}
\author{Eleanor G. Rieffel}
\affiliation{QuAIL, NASA Ames Research Center, Moffett Field, CA 94035, USA}
\author{Simon J. Devitt}
\affiliation{Centre for Quantum Software and Information, University of Technology Sydney, Sydney, New South Wales 2007, Australia}
\affiliation{InstituteQ, Aalto University, 02150 Espoo, Finland}
\author{Jason Saied}
\affiliation{QuAIL, NASA Ames Research Center, Moffett Field, CA 94035, USA}
\begin{abstract}
Type-II fusion is a probabilistic entangling measurement that 
is essential to measurement-based linear optical quantum computing and can be used for quantum teleportation more broadly. 
However, it remains under-explored for high-dimensional qudits. 
Our main result gives a Type-II fusion protocol with proven success probability approximately $2/d^2$ for qudits of arbitrary dimension $d$. 
This generalizes a previous method which only applied to even-dimensional qudits. 
We believe this protocol to be the most efficient known protocol for Type-II fusion, with the $d=5$ case beating the previous record by a factor of approximately $723$. 
We discuss the construction of the required $(d-2)$-qudit ancillary state using a silicon spin qudit ancilla coupled to a microwave cavity through time-bin multiplexing. 
We then introduce a general framework of extra-dimensional corrections, a natural technique in linear optics that can be used to non-deterministically correct non-maximally-entangled projections into Bell measurements. 
We use this method to analyze and improve several different circuits for high-dimensional Type-II fusion and compare their benefits and drawbacks. 
\end{abstract}
\maketitle
\section{Introduction}
Photonic hardware is a cutting-edge and highly promising platform for building a scalable quantum computer~\cite{rudolph2017optimistic,psiQ}. It has the potential to be far more scalable than solid-state hardware, as millions of photons can be generated from a single device. However, entangling photons is difficult due to their inability—or extremely weak ability—to interact with one another \cite{Kok_2007, Ustun2025}.

Despite this, individual photons can naturally spread across multiple modes, enabling the creation of useful high-dimensional states deterministically~\cite{rudolph2021everywhereundetectabledistributedquantum,Ustun2025, reck1994experimental, sophia_2024}. 
Further, for a $d$-dimensional linear optical qudit, arbitrary single-qudit unitaries may be performed deterministically using only linear optical elements. 
These ingredients may be combined with a single entangling operation to 
generate arbitrary multi-qudit states and perform universal high-dimensional quantum computing \cite{Paesani, Lib_2024}. 
Since a single photon may encode a high-dimensional qudit (equivalent to a large number of qubits), the higher-dimensional setting is attractive for quantum networking and computing \cite{Cabrejo_Ponce_2023,Zahidy2024,Lib_2025,Lib_2024,sophia_2024, shalaev2025photonicnetworkingquantummemories, Yamazaki}. 
In principle, photon loss makes such encodings less desirable; however, introducing redundancy as in quantum error correction may circumvent this issue. 
For example, the $[[5,1,3]]_{\mathbb{Z}_d}$ modular-qudit code \cite{chau1997five} enables one to use $5$ physical ($d$-dimensional) qudits to encode one $d$-dimensional logical qudit and can tolerate $2$ erasures (equivalent to photon losses) regardless of dimension. In other words, replacing the physical systems with qudits allows for the encoding of more information, with the same robustness to error, without increasing the number of physical systems required. 
A similar analysis holds for the qudit surface code \cite{bullock2007qudit}. 

Qudit graph states can potentially serve as a resource for high-dimensional analogues of fusion-based quantum computing (FBQC), 
a universal model for quantum computation that has been well studied in the qubit case~\cite{Bartolucci2023,bombin2021interleaving, litinski2022active, bombin2023fault}. FBQC involves taking small photonic resource states and performing entangling measurements (called \emph{Type-II fusions}) to carry out a version of fault-tolerant measurement-based quantum computation. 
The prototypical Type-II fusion is a non-deterministic Bell measurement \cite{Terry_2005}; this has been generalized to include measurements projecting onto other maximally entangled states \cite{bartolucci2021creationentangledphotonicstates}. 
In this work, we consider further generalizations that can utilize projections onto non-maximally-entangled states (see Section~\ref{sec:USD}). 

Extensive research has been conducted on the effectiveness of linear optical measurements that project onto the four standard dual-rail Bell basis states. Early studies \cite{Calsamiglia_2001} demonstrated that Bell state measurements (BSMs) performed with only vacuum ancillae can achieve, at most, 50\% efficiency in distinguishing all four Bell states. However, substantial improvements \cite{Grice, Ewert_2014} have been achieved by incorporating ancillary photonic states, allowing the efficiency to exceed the 50\% threshold. 
For example, the protocol of \cite{Ewert_2014} can boost the success probability to 62.5\% using two ancillary single photons. (Or 75\% with four ancillary single photons.) 
More advanced approaches to Type-II fusion were later introduced in \cite{bartolucci2021creationentangledphotonicstates}, including a boosted Type-II fusion scheme that is not a conventional BSM, but achieves a higher success probability of 66.6\% using two ancillary single photons. 
In principle, sufficiently entangled ancillae allow one to boost the success probability arbitrarily close to $1$ \cite{Ewert_2014}. 

For linear optical qudits, however, only a few studies have explored high-dimensional Type-II fusion, through the perspective of quantum teleportation. %
These existing approaches either require some form of non-linearity~\cite{goyal2014qudit,Goel2024}, or work only for even-dimensional qudits \cite{bharos2024efficienthighdimensionalentangledstate, bharoshigh}, or have success probability rapidly decaying with the qudit dimension \cite{Luo}. 
(We note that, despite claims in subsequent literature, the work of \cite{Luo} does not claim a success probability scaling like $1/d^2$; rather, it decays exponentially: see Section~\ref{sec:luo}.) 
The absence of an efficient linear optical Type-II fusion gate for odd-dimensional qudits is a serious gap in the literature, as much work on qudits requires the dimension to be prime or a prime power. 
When the dimension is not a prime power, many familiar features from the qubit setting no longer apply: for example, the analogue of the Clifford group is not a 2-design \cite{heinrich2021stabiliser}, and it is unknown whether there exist complete sets of mutually unbiased bases \cite{mcnulty2024mutually}. 
This leads to obstacles in quantum state tomography and other areas \cite{rambach2021robust, lima2011experimental, bent2015experimental, peres2023pauli}. 

In this work, we extend the protocol of \cite{bharos2024efficienthighdimensionalentangledstate} to construct a Type-II fusion gate for qudits of \emph{arbitrary} dimension $d$. We prove that in the new case in which $d$ is odd, the success probability is $2/d(d+1)$ (see Section~\ref{sec:bharos odd}). 
This is the \textbf{highest success probability recorded to date} and allows for fusion in the important prime-dimensional case. 
For illustration, in the case $d=5$ our method has a success probability over $723$ times higher than that of \cite{Luo}, which we believe to be the previous record for linear-optical Type-II fusion. 
Note the scaling we find in the odd-dimensional case is comparable with the even-dimensional case studied by \cite{bharos2024efficienthighdimensionalentangledstate, bharoshigh}, which has success probability $2/d^2$. 
We also provide an example demonstrating how to construct the required ancilla (equivalent to a lower-dimensional GHZ state of $d-2$ or $d-1$ photons in the even and odd cases respectively) using a realistic physical system. 

In subsequent sections, we consider other protocols for high-dimensional Type-II fusion, which avoid the need for multi-photon entangled ancillae but have lower success probability than the protocols of \cite{bharos2024efficienthighdimensionalentangledstate, bharoshigh} and our Section~\ref{sec:bharos odd}. 
This is motivated by the method of~\citet{Luo}, which uses multiple single-photon ancillae rather than one many-photon entangled ancilla, and further does not project onto a maximally entangled state: instead, the protocol requires certain non-deterministic, adaptive corrections (beyond simply phases). 
We establish a rigorous framework extending this method of \emph{extra-dimensional corrections}, in which ancillary vacuum modes are appended to a linear optical state, an interferometer is applied on the larger set of modes, and post-selection is performed to filter out terms nontrivially utilizing the extra modes. 
This is a relatively straightforward operation in linear optics, similar to appending ancillary modes for a standard (boosted) Type-II fusion or entangled state generation circuit. 
Extra-dimensional corrections are a concrete linear optical implementation of Procrustean distillation as in \cite{vidal1999entanglement}. 
This method can be used to correct a larger class of projections to Bell measurements and improve the overall success probability. 

We apply extra-dimensional corrections, improving the framework of \citet{Luo} for $d>3$, as seen in Table~\ref{table:luo} and Section~\ref{sec:scaling}. 
We numerically show that this extended protocol has significantly lower success probability than the Type-II fusion protocol we give in Section~\ref{sec:bharos odd} (which \emph{does not} require such corrections). 
However, the protocol of Section~\ref{sec:scaling} may still be worth considering for certain applications, since it requires only $d-2$ single-photon ancillae (rather than a multi-photon entangled state). 

We also consider a high-dimensional Type-II fusion protocol derived from a Bell state generation circuit of \cite{Paesani}. 
This method, discussed in Section~\ref{sec:paesani}, requires extra-dimensional corrections and a single-mode, multi-photon bunched state $\ket{r}$ as an ancilla. 
As shown in Table~\ref{table:summary}, this protocol has an intermediate success rate between the other cases. 
In particular, if one has access to multi-photon entanglement, the protocols of Sec.~\ref{sec:bharos odd} (extending \cite{bharos2024efficienthighdimensionalentangledstate, bharoshigh}) are optimal and extra-dimensional corrections are not needed. 
If one cannot prepare multi-photon entangled ancillae, but \emph{can} prepare single-mode bunched states, then the methods of Section~\ref{sec:paesani} are the best known. 
Finally, if one has access only to single-photon ancillae, the best known protocol is that of Section~\ref{sec:scaling}. 

In Section~\ref{sec:psiq-increased}, we also analyze a new Type-II fusion protocol for \emph{qubits} ($d=2$), generalizing a protocol of \cite{bartolucci2021creationentangledphotonicstates}. 
This method, which uses extra-dimensional corrections (with at most one extra dimension needed), is notable because, numerically, its success probability seems to continue increasing as more single photon ancillae are added. 
However, for small numbers of ancillary photons, it is less effective than other methods such as \cite{Ewert_2014}. 

The paper is organized as follows. We begin with Section~\ref{sec:preliminaries}, giving background material on linear optics, FBQC, Type-II fusion, and teleportation. 
We also discuss the paradigm of \emph{Fourier projection}, allowing us to view the Type-II fusion protocols of \cite{Luo, bharos2024efficienthighdimensionalentangledstate, bharoshigh} as instances of a single family using different choices of ancillary state. 
Next, in Section~\ref{sec:bharos extension}, we present our main results. We review the work of \cite{bharos2024efficienthighdimensionalentangledstate, bharoshigh} in Section~\ref{sec:bharos even} and, in Section~\ref{sec:bharos odd}, we extend their techniques to give the optimal known Type-II fusion gate for odd-dimensional qudits. 
This essentially involves embedding the odd-dimensional qudits into a larger even-dimensional system. 
In Section~\ref{sec:ancilla}, we then propose a physical system for constructing the ancilla required for our Type-II fusion gate and that of \cite{bharos2024efficienthighdimensionalentangledstate, bharoshigh}. 
Section~\ref{sec:implement} discusses the feasibility of near-term implementations of our protocol for small dimensions $d$. 
In Section~\ref{sec:USD}, we introduce the notion of extra-dimensional corrections. 
Section~\ref{sec:luo} reviews the work of \cite{Luo}, applies extra-dimensional corrections to increase the success probability, and clarifies a common but incorrect assumption in the literature: we show that both the protocol of \cite{Luo}, and its improvement using extra-dimensional corrections, have success probability far smaller than $1/d^2$ for general $d$. 
We subsequently consider alternative Type-II fusion gates in Sections~~\ref{sec:paesani} and \ref{sec:psiq-increased} and Appendices~\ref{app:paesani} and \ref{sec:boosted_high_dim}. 
The last is a ``boosted'' fusion generalizing the Type-II fusion of \citet{Luo} and the qubit boosting protocol of \citet{Ewert_2014}. 
We then summarize the work in Section~\ref{sec:conclusion}.
A summary table giving the success probabilities of many of the Type-II fusion protocols we consider here is given in Table~\ref{table:summary}. 

\begin{table}[htb]
\begin{tabular}{|c|c|c|c|}
\hline
 & $d=3$ & $d=4$ & $d=5$ \\ \hline
\cite{bharos2024efficienthighdimensionalentangledstate, bharoshigh} & N/A & \textbf{0.125}; 2 & N/A \\ \hline
Sec.~\ref{sec:bharos odd} & \textbf{0.166}; 2 & \textbf{0.125}; 2 & \textbf{0.067}; 4 \\ \hline
\cite{Luo} & 0.111; 1 & $9.8\times 10^{-4}$; 2 & $9.2\times 10^{-5}$; 3 \\ \hline
Sec.~\ref{sec:scaling} & 0.111; 1 & 0.017; 2 & 0.003; 3 \\ \hline
Sec.~\ref{sec:paesani} & 0.116; 1 & 0.020; 2 & 0.004; 3 \\ \hline
Sec.~\ref{sec:paesani} & 0.140; 4 & 0.056; 5 & 0.018; 5 \\ \hline
\end{tabular}
\caption{\small We consider the main Type-II fusion protocols discussed in the present work. For qudit dimension $3\leq d\leq 5$, we give the largest success probability found using that protocol, followed by the number of ancillary photons required. 
We give two variants of the results in Sec.~\ref{sec:paesani}: one minimizing the number of ancillary photons and the other maximizing the success probability. 
The first two rows use a multi-photon entangled ancilla; the next two use multiple single-photon ancillae; the final two use a one-mode, multi-photon bunched state $\ket{r}$. 
The success probabilities for \cite{bharos2024efficienthighdimensionalentangledstate, bharoshigh} and our Sec.~\ref{sec:bharos odd} are given in boldface, as they are the largest known in the appropriate dimensions. These two protocols are also unique because they do \emph{not} require the extra-dimensional corrections discussed in Sec.~\ref{sec:USD}. 
\label{table:summary}
}
\end{table}

\section{Preliminaries}\label{sec:preliminaries}
Note: If the reader is familiar with the fundamental concepts of fusion-based quantum computation for qubits, such as Type-II fusion and boosted variants, they may proceed directly to Section~\ref{sec:fourier}. 

\subsection{Linear Optics Notation}\label{sec:notation}
We briefly set out notation used throughout the work. We use $\ket{n_0 n_1 \cdots n_{m-1}}$ to denote the $m$-mode Fock state with $n_0$ photons in mode $0$, $n_1$ photons in mode $1$, etc. 
We will generally be concerned with \emph{linear optical qudits}, a generalization of dual-rail qubits. For qudits of dimension $d$, we define the linear optical qudit computational basis states by 
\begin{equation}
    \ketb{k} = |00\cdots 0 \overset{\overset{k}{\downarrow}}{1} 0 \cdots 0\rangle, 
\end{equation}
with a $1$ in the $k$th index, for $0\leq k\leq d-1$. 
If we are partitioning a linear optical state into multiple qudits, we will often use commas to separate the appropriate modes. For example, with qudit dimension $d=4$, the following are equivalent expressions for the same state with photons in modes $0$ and $5$: 
\begin{equation}
    \ketb{01} = \ket{10000100} = \ket{1000,0100} = \ketb{0,1}.
\end{equation}
We also introduce the qudit \emph{Pauli operators} in dimension $d$, 
\begin{align*}
    X &= \sum_{k=0}^{d-1} \ketb{k+1}\bra{\mathbf{k}}, 
    \\ Z &= \sum_{k=0}^{d-1} \omega^{k}\ketb{k}\bra{\mathbf{k}}, 
\end{align*}
where $\omega$ is a fixed primitive $d$th root of unity and $\ketb{k+1}$ is interpreted modulo $d$. 

We now consider the space of $n$ linear optical qudits, each of dimension $d$. 
The first $d$ modes correspond to one qudit, the next $d$ modes to the next qudit, and so on. 
Often, we consider a physical setting in which the qudits are identified with different spatial modes or \emph{ports}, and the modes within a qudit correspond to \emph{time bins}. 
For convenience, and to avoid overloading the word ``mode," we often refer to the $d$ modes within a single qudit as \emph{time bins} regardless of the physical context. (The results apply regardless of how the modes are physically implemented.) 
Especially given this context, it is often helpful to index the modes by pairs $(i,j)$, corresponding to mode $d*i+j$, the $j$th ``time bin" within the $i$th qudit. 

The relevant transformations are the \emph{linear optical unitaries} (LOUs), which arise from interferometers and linearly transform creation operators. 
An important observation is that arbitrary LOUs \emph{do not} preserve the space of linear optical qudits. 
Single-qudit unitaries may easily be implemented by LOUs that act only on the relevant $d$ modes, but multi-qudit entangling operations cannot be implemented using LOUs alone. 
This will lead to the various notions of fusion discussed below, in which LOUs and post-selective measurements are used to non-deterministically implement entangling operations. 

We will often use the LOU corresponding to the discrete Fourier transform, 
\begin{equation}\label{eq:fourier}
    F_n = \begin{pmatrix}
        \omega_n^{ij}
    \end{pmatrix}_{0\leq i,j\leq n-1},
\end{equation}
where $\omega_n$ is a primitive $n$th root of unity. (When $n=d$, we generally write $\omega=\omega_n$.) 
This unitary may operate on any subset of $n$ modes.

\subsection{Fusion Based Quantum Computation for Qubits}
The design of a quantum computer hinges on the careful selection of fundamental physical operations and the way they are combined, shaping both the hardware and overall architecture. This becomes particularly important in fault-tolerant quantum computing, in which a small set of operations is applied repeatedly throughout the process. 
The physical limitations and error mechanisms from the underlying physical primitives can greatly affect the efficiency and performance of the logical-level quantum computation. 
Fusion-based quantum computing (FBQC) \cite{Bartolucci2023} is a model for universal fault-tolerant quantum computation with linear optics, designed to circumvent the lack of a deterministic entangling (unitary) gate in that setting. 
A non-deterministic \emph{gate} would necessitate repeating a circuit many times before obtaining a useful outcome \cite{knill2001scheme}; instead, FBQC utilizes Type-II fusion, a non-deterministic entangling \emph{measurement} that can be used in an adapted version of measurement-based quantum computation. 
(Although FBQC can in principle be applied to other physical systems, this motivation makes it most relevant to linear optics.)
FBQC requires two main ingredients: constant-size entangled states known as \emph{resource states}, 
and entangling measurements, referred to as \textit{Type-II fusions} (or just \emph{fusions}). In the simplest case, discussed below, fusion is a Bell state measurement, non-deterministically implemented using linear optics \cite{Terry_2005}. 

\begin{figure}[htb]
    \centering
    \includegraphics[width=\linewidth]{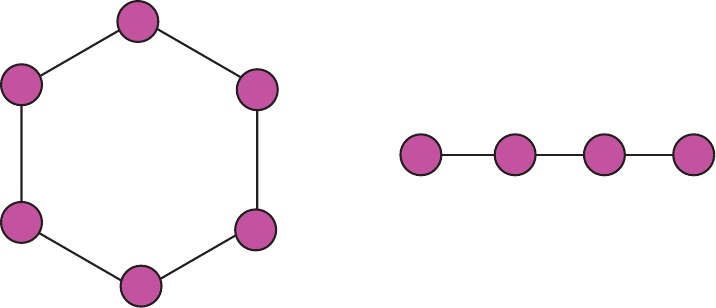}
    \caption{Examples of qubit graph states. Up to single-qubit Clifford operations, any stabilizer state is expressible as a graph state, so they are commonly used as entangled resource states. On the left is a graph state consisting of six entangled photonic qubits arranged in a hexagonal configuration, often referred to as a $6$-ring \cite{Bartolucci2023}. On the right is a graph state of four entangled photonic qubits, generally called a linear cluster state. 
}
    \label{fig:fusion}
\end{figure}
\begin{figure}[ht]
    \centering
    \includegraphics[width=\linewidth]{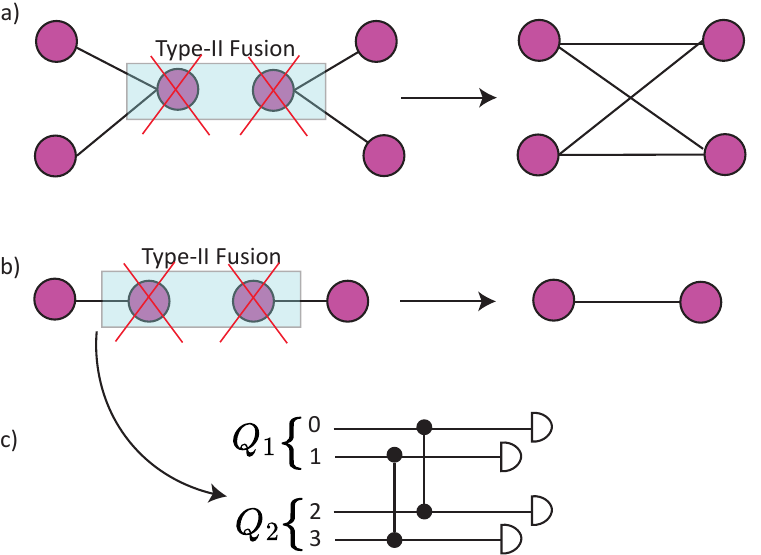}
    \caption{\small Type-II fusion:  
    a) We perform Type-II fusion between chosen qubits of two three-photon graph states. Type-II fusion destroys all the qubits that are involved - measured - in the process. The resulting graph state after the Type-II fusion is a graph state with nontrivial entanglement between the four surviving photons. b) For a simpler example, Type-II fusion can be used to join two linear cluster states into a single joint cluster. Here we obtain a Bell pair; if the initial linear clusters instead had lengths $n$ and $m$, the resulting linear cluster would have length $n+m-2$. 
    c) Standard Type-II fusion circuit: Qubits are represented using dual-rail encoding, where each qubit consists of one photon between two modes. Two 50:50 beamsplitters (Hadamard operations) are applied between modes 0 and 2, and modes 1 and 3, followed by measurements on all modes.}
    \label{fig:fusion-types}
\end{figure}

\subsection{Type-II Fusion}\label{sec:type-ii}
In Type-II fusion, we have two states $\ket{\alpha}, \ket{\beta}$, each involving many linear optical qudits in general. 
We fix a qudit dimension $d$. 
In order to create a larger entangled state, we wish to perform a Bell measurement involving two qudits, one from $\ket{\alpha}$ and one from $\ket{\beta}$. 
In particular, we want to project the two input qudits onto the ideal Bell state $\ket{B_0}= \frac{1}{\sqrt{d}}\sum_{k=0}^{d-1}\ketb{kk}$, destroying them in the process and creating entanglement between the remaining qudits of $\ket{\alpha}$ and $\ket{\beta}$. 
(See Fig.~\ref{fig:fusion-types}a and Fig.~\ref{fig:fusion-types}b.) 

Type-II fusion was introduced in the qubit setting by \cite{Terry_2005} to overcome the fact that Bell measurements cannot be deterministically implemented in linear optics. 
Instead, linear optics and photon number-resolving detection (PNRD) are used to implement a POVM on the two input qudits: some of the POVM elements correspond to a heralded \emph{fusion failure}, projecting onto an unentangled state; others lead to \emph{success}, projecting onto $(1\otimes P)\ket{B_0}$, where $P$ is a known single-qubit Pauli operator determined by the outcome. 
This Pauli factor may then be deterministically corrected by applying appropriate Pauli operators to the surviving qubits. 
For random input states, this Type-II fusion protocol has a success probability of $1/2$. 
A variant of the original Type-II fusion circuit (without the corrections) is given in Fig.~\ref{fig:fusion-types}c. 
Following a standard convention, the gates depicted are 50:50 beamsplitters with Hadamard transfer matrix, and the $D$ symbols at the right side of each mode indicate PNRD. 
By performing PNRD across all modes of the input qubits, Type-II fusion enables robust detection of photon number errors, making it particularly useful in practical implementations \cite{Terry_2005}.

Subsequent works increase the probability of success (still in the qubit case) by a method called \emph{boosting}, introducing ancillary photons and a more complicated POVM \cite{Grice, Ewert_2014}. 
In the method of \cite{Ewert_2014}, 
the key is to append an ancillary state and additional beamsplitters after the original fusion circuit but before PNRD. The ancilla is chosen to preserve appropriate parities, which prevents previously successful fusion outcomes from turning into failures, while also converting some failure outcomes into successful Bell state projections. 
The chosen ancilla in this boosting protocol is the state $\ket{C^2} = \frac{1}{\sqrt{2}} (\ket{20} - \ket{02})$, which can be deterministically obtained by sending two single photons through different arms of a 50:50 beamsplitter. 
The circuit diagram of this boosted fusion, which has a success rate of $5/8 = 0.625$, is shown in Fig.~\ref{fig:boosted_type-II}. 
Note that the circuit uses two extra modes, coupled with the modes originally occupied by the second qubit. 
One may carry out the same procedure for the other qubit as well, using four ancillary photons in four ancillary modes; this increases the success probability to $3/4=0.75$. The details of this process, and higher-probability variants, can be found in~\cite{Ewert_2014}. 

Another boosting protocol for Type-II fusion is introduced in~\cite{bartolucci2021creationentangledphotonicstates}, depicted in Fig.~\ref{fig:boosted_type-II-2}. This method uses a single photon ancilla and incorporates a three-dimensional Fourier transformation. Unlike previous boosting protocols, it also allows for projections onto maximally entangled states that are \emph{not} Bell states. 
In the right circumstances, the resulting projections may be corrected by applying appropriate unitaries to the unmeasured qubits, achieving the originally desired Bell measurement. 
The success probability of this boosting protocol is 
$7/12$. 
Like the method of \cite{Ewert_2014}, one may carry out this procedure for \emph{both} qubits to increase the success probability to $2/3$. 
More details regarding this protocol can be found in~\cite{bartolucci2021creationentangledphotonicstates}. 

Such reasoning may be carried into the qudit case as well. We wish to project onto $\ket{B_0}= \frac{1}{\sqrt{d}}\sum_{k=0}^{d-1}\ketb{kk}$; instead, we perform a POVM, potentially involving ancillary photons. Certain outcomes project the target qudits onto unentangled states, leading to fusion failure. 
Others may project onto maximally entangled states, allowing for unitary corrections generalizing \cite{bartolucci2021creationentangledphotonicstates}. 
Further, we will see in Section~\ref{sec:USD} that projections onto many entangled states that are \emph{not} maximally entangled may also be corrected, if we are willing to accept a probability of failure in this step as well. 

\begin{figure}
    \centering
    \includegraphics[width=0.6\linewidth]{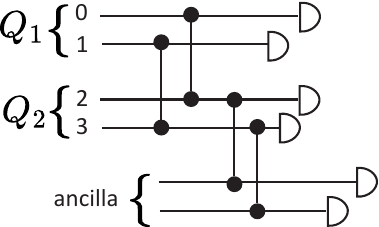}
    \caption{Boosted Type-II Fusion \cite{Ewert_2014}. 
    As in Fig.~\ref{fig:fusion-types}, we apply 50:50 (Hadamard) beamsplitters between corresponding modes of the input qubits. Before measuring, however, we also allow for interference with an ancillary state of the form $\frac{1}{\sqrt{2}}(\ket{20} - \ket{02})$. 
    This protocol has a success probability of $0.625$; if the other qubit undergoes a similar treatment, the success probability is increased to $0.75$. 
    }
    \label{fig:boosted_type-II}
\end{figure}

\begin{figure}
    \centering
    \includegraphics[width=0.6\linewidth]{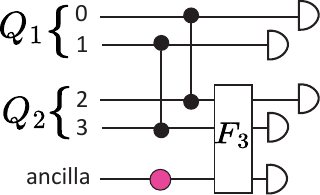}
    \caption{Boosted Type-II Fusion \cite{bartolucci2021creationentangledphotonicstates}. Similarly to Fig.~\ref{fig:boosted_type-II}, we allow for interference with an ancillary state between the standard fusion circuit and the measurement. This protocol uses only a single ancillary photon and a three-mode Fourier transform. 
    The success probability is $7/12$ as depicted and $2/3$ if the additional interference is implemented on both sides. 
}
    \label{fig:boosted_type-II-2}
\end{figure}
\subsection{Fusion, Teleportation, Bell State Generation}\label{sec:teleportation}
We now recall the well-understood connections between fusion, teleportation, and Bell state generation.

In an idealized teleportation setting, we have two parties Alice and Bob, each possessing half of a two-qudit Bell pair $\ket{B_0} = \sum_{k=0}^{d-1}\frac{1}{\sqrt{d}}\ketb{kk}$. Alice has some single-qudit state $\ket{\psi}$. By performing a Bell measurement on her two qudits ($\ket{\psi}$ and half of the Bell pair), she projects Bob's subsystem onto one of the states $P\ket{\psi}$, where $P$ is a single-qudit Pauli operator. 
Alice communicates the result of her Bell measurement to Bob, allowing him to determine the appropriate Pauli operator required to transform his subsystem into Alice's initial state $\ket{\psi}$. 

If the state $\ket{\psi}$ is a linear optical qudit, however, a perfect Bell measurement cannot be performed. 
Instead, Alice can perform a Type-II fusion between $\ket{\psi}$ and her half of the Bell pair $\ket{B_0}$. 
As discussed above, this process is non-deterministic, with the fusion itself and the resulting correction potentially leading to failure. 
If both succeed, however, Alice's state will be teleported onto Bob's qudit. 
The teleportation protocols of \cite{Luo, bharos2024efficienthighdimensionalentangledstate, bharoshigh} all implement non-deterministic Bell measurements (with known corrections), and therefore they may be viewed as protocols for Type-II fusion between two states $\ket{\alpha}, \ket{\beta}$. 
Teleportation is simply the special case in which $\ket{\alpha} = \ket{\psi}$ is a single-qudit state and $\ket{\beta}$ is a Bell pair. 
Not all teleportation protocols may be recast as Type-II fusion in this way, however; the work of \cite{zhang2019quantum}, for example, cannot obviously be viewed as a Bell measurement between two parties. 

We also recall that, since Type-II fusion is essentially a Bell state measurement (up to potentially complicated correction operations), it may be viewed as dual to Bell state generation. 
Then one may reverse fusion circuits to obtain Bell state generation protocols and vice versa, as long as one is careful about the choice of ancillary state and interpretation of the resulting measurements. 
Works such as \cite{bartolucci2021creationentangledphotonicstates, Paesani} exploit this duality, reversing fusion circuits to obtain Bell state generation protocols and vice versa. 
In Section~\ref{sec:paesani}, we exploit this duality to convert Bell state generation protocols of \cite{Paesani} into higher-dimensional fusion circuits and analyze their effectiveness. 
\subsection{Fourier Projection}\label{sec:fourier}
We now discuss a linear optical circuit we refer to as \emph{Fourier projection}, which is the main step in the qudit fusion protocols of \cite{Luo, bharos2024efficienthighdimensionalentangledstate, bharoshigh}. 
We consider a state of $d$ linear optical qudits, each of dimension $d$. 
This is a state of $d^2$ modes: we view the first $d$ modes as corresponding to one qudit, the next $d$ modes to the next qudit, and so on. 
Recall the pairwise indexing of the modes from Section~\ref{sec:notation}, in which $(i,j)$ corresponds to the $j$th mode of qudit $i$. 

\begin{protocol} The Fourier projection is given as follows. \label{proto:fourier}
\begin{enumerate}
    \item Input $d$ linear optical qudits of dimension $d$ as given above. 
    \item For each $0\leq j\leq d-1$, apply the Fourier interferometer \eqref{eq:fourier} on the set of modes $\{(i,j): 0\leq i\leq d-1\}$. In a physical setting in which the qudits correspond to spatial modes (ports) and the modes within a qudit correspond to time bins, this is equivalent to a Fourier interferometer acting only on the spatial modes and fixing the time bins. 
    \item Measure all $d^2$ modes using PNRD. 
    \item The Fourier projection is considered a \emph{success} if, for each $0\leq j\leq d-1$, exactly one photon is detected in the modes $\{(i,j): 0\leq i\leq d-1\}$. (In other words, we post-select for measurement outcomes in which each time bin receives exactly one photon.) Otherwise, the projection is a \emph{failure}. 
\end{enumerate}
\end{protocol}

Recalling that $F_2$ is simply the Hadamard matrix, in the $d=2$ case this is precisely the Type-II fusion gate of Fig.~\ref{fig:fusion-types}. 
The $d=3,4$ cases are depicted in Figs.~\ref{fig:luo-fusion} and \ref{fig:bharos-type-II} respectively (ignoring the choices of input states made there). 
In those figures, motivated by the $(i,j)$ indexing of the modes discussed above, the $d$ Fourier interferometers applied in Step 2 of Protocol~\ref{proto:fourier} are depicted as a single unitary $F_d\otimes I$. 


We index successful measurement patterns by $(q_0, q_1, \dots, q_{d-1})$, where $q_i$ is the index of the unique \emph{qudit} (or port) containing a photon in time bin $i$. 
We have the following, observed in \cite{Luo, bharos2024efficienthighdimensionalentangledstate, bharoshigh} and easily seen by direct computation: 
\begin{lemma}\label{lemma:fourier projection}
    For qudit input, successful Fourier projection with measurement pattern $(q_0, \dots, q_{d-1})$ projects onto the (unnormalized) state
    \begin{equation*}
        \ket{f_d^{(q)}} = \frac{1}{d^{d/2}}\sum_{\substack{k_0, \dots, k_{d-1}\\ \textnormal{distinct}}} \omega^{-\sum_{j=0}^{d-1} k_j q_j}\ketb{k_0, k_1, \dots, k_{d-1}}.
    \end{equation*}
\end{lemma}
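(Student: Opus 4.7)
The plan is to work directly from the definition in Protocol~\ref{proto:fourier}: apply the Fourier interferometers to a general qudit input and compute the amplitude of each successful outcome pattern. By linearity, it suffices to check the statement on a computational basis input $\ketb{\mathbf{k}}$.

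First I would write $\ketb{\mathbf{k}} = \prod_{i=0}^{d-1} a^\dagger_{(i, k_i)}\ket{\mathrm{vac}}$, so the input is a Fock state with the photon from port $i$ sitting in time bin $k_i$. Because each of the $d$ Fourier interferometers in Step~2 acts on a single time bin and all of them commute with one another, the action factors cleanly through the product of creation operators. Using the standard DFT action $a^\dagger_{(i, j)} \mapsto \tfrac{1}{\sqrt{d}} \sum_{q} \omega^{qi}\, a^\dagger_{(q, j)}$ and expanding, I obtain a superposition over tuples $\mathbf{q}' \in \{0, \dots, d-1\}^d$ with overall phase $\omega^{\sum_i i\, q'_i}$ and with the photon from port $i$ now sitting in mode $(q'_i, k_i)$.

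The next step is to pick out the terms contributing to a successful outcome. Since each photon's time bin is unchanged by the Fourier interferometers, requiring one photon in each time bin forces the $k_i$ to be all distinct, so only permutation inputs contribute. Conversely, when $\mathbf{k}$ is a permutation, every tuple $\mathbf{q}'$ produces a valid one-per-time-bin outcome, and matching to the measurement pattern $\mathbf{q}$ (where $q_j$ is the port of the photon in time bin $j$) gives the bijection $q'_i = q_{k_i}$. Thus the amplitude for outcome $\mathbf{q}$ on input $\ketb{\mathbf{k}}$ is $\tfrac{1}{d^{d/2}}\omega^{\sum_i i\, q_{k_i}}$. Taking conjugate amplitudes and collecting across all permutations $\mathbf{k}$, then reindexing the exponent via the substitution $j = k_i$ to bring it into the bilinear form $\sum_j k_j q_j$ appearing in the statement, yields exactly $\ket{f_d^{(q)}}$.

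I do not expect any single step to be a genuine obstacle: the whole argument is direct linear-optics algebra, and the restriction to permutation inputs is immediate once one notices that time-bin assignments are preserved. The only care needed is the bookkeeping between two dual indexings --- the input index $\mathbf{k}$ recording the time bin of each port, and the outcome index $\mathbf{q}$ recording the port of each time bin --- which are related by inversion of the permutation $\mathbf{k}$, and which combine into the symmetric bilinear exponent only after an appropriate reindexing of the sum over permutations.
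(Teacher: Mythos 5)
Your overall strategy---direct computation of the interferometers' action on computational-basis inputs---is the right one, and it is essentially the only proof available; the paper itself gives no more than ``easily seen by direct computation.'' Your derivation is correct up to and including the amplitude formula: non-permutation inputs contribute to no success pattern because time bins are preserved, and on a permutation input $\ketb{k}$ the amplitude of the success pattern $\mathbf{q}$ is $d^{-d/2}\,\omega^{\sum_i i\, q_{k_i}}$.

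The gap is in your last step. The substitution $j = k_i$ turns the exponent $\sum_i i\, q_{k_i}$ into $\sum_j k^{-1}(j)\, q_j$, where $k^{-1}$ is the \emph{inverse} permutation---not into $\sum_j k_j q_j$. Nor can you repair this by ``reindexing the sum over permutations'': replacing $\mathbf{k}\to\mathbf{k}^{-1}$ as the summation variable carries the kets along with the phases, producing $\sum_{\mathbf{k}}\omega^{-\sum_j k_j q_j}\ketb{k^{-1}}$, which differs from the claimed $\sum_{\mathbf{k}}\omega^{-\sum_j k_j q_j}\ketb{k}$ by the nontrivial basis relabeling $\ketb{k}\mapsto\ketb{k^{-1}}$. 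Concretely, take $d=3$, input $\ketb{1,2,0}$, pattern $\mathbf{q}=(1,0,0)$: the photon of port $2$ sits in time bin $0$ and must exit in port $1$, so the true amplitude is $\omega^{2}/3^{3/2}$, whereas $\omega^{\sum_j k_j q_j}=\omega^{k_0 q_0}=\omega^{1}$. What your computation actually proves is the lemma with exponent $\sum_j k^{-1}(j)\,q_j$ (equivalently $\sum_i i\,q_{k_i}$); under the paper's stated conventions ($q_j$ = port of the photon in time bin $j$, $k_i$ = time bin of the photon of port $i$), that is the correct statement, and the printed formula is off by exactly this inversion. The slip is harmless downstream---the paper only ever uses that the weights are uniform and the phases are known functions of $\mathbf{q}$, hence correctable, and the two forms coincide for $q=(0,\dots,0)$ and for $d=2$---but your proof should either establish the corrected statement and flag the discrepancy, or genuinely reconcile the two conventions; the reindexing argument as written does neither.
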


In the works \cite{Luo, bharoshigh}, Fourier projection is used to implement Type-II fusion; 
$d-2$ of the input qudits are used for an ancillary state, and the remaining two qudits are the inputs to Type-II fusion. 
(The protocol of \cite{bharos2024efficienthighdimensionalentangledstate} does not use qudits for the ancillary state, but the idea is essentially equivalent.) 
The art to these fusion protocols is in choosing the appropriate ancillae and corrections in order to maximize the success probability. 
\section{Main Results: Optimal Known High-Dimensional Fusion}\label{sec:bharos extension}
We now present the optimal known protocols for fusion of Type-II linear optical qudits of dimension $d$. 
In Section~\ref{sec:bharos even}, we present the case in which $d$ is even, due to \cite{bharos2024efficienthighdimensionalentangledstate, bharoshigh}. 
In Section~\ref{sec:bharos odd}, we present our main results, extending this protocol to odd-dimensional qudits. This includes the most applicable case, in which the dimension $d$ is an odd prime. 
Finally, in Section~\ref{sec:ancilla}, we discuss an example physical system that could feasibly be used to construct the required entangled ancillary states. 

\subsection{Known even-dimensional case}\label{sec:bharos even}
We now review the fusion protocol of \cite{bharoshigh, bharos2024efficienthighdimensionalentangledstate},  applicable to qudits with even dimension $d$, with a success probability of \( 2/d^2 \). Note that when $d=2$, this recovers the standard success probability of (unboosted) Type-II fusion. 
We note that we technically follow \cite{bharoshigh}; the work \cite{bharos2024efficienthighdimensionalentangledstate} is conceptually similar, but replaces the $(d-2)$-qudit ancillary state with a related state involving $d-2$ photons within only $d$ modes. 

The circuit diagram of the protocol for $d=4$ is shown in Fig.~\ref{fig:bharos-type-II}: this is a Fourier projection with appropriately chosen ancillary state, input to the final $d-2$ ports. 
(We note that there are many equivalent choices of ancilla, discussed in \cite{bharos2024efficienthighdimensionalentangledstate, bharoshigh}, but for ease of exposition we make a specific choice.) 
In the $d=4$ case, the ancilla has $2$ photons in $8$ modes and is given by 
\begin{align}
    \ket{A_4} &= \frac{1}{\sqrt{2}}(\ket{10000100} + \ket{00100001})
    \\&= \frac{1}{\sqrt{2}}(\ket{\mathbf{01}} + \ket{\mathbf{23}}).
\end{align}
We note that, if modes $1,3,4,6$ (all of which are empty) are omitted, the state $\ket{A_4}$ is seen to be simply the \emph{two-dimensional} Bell pair $\frac{1}{\sqrt{2}}(\ket{1010} + \ket{0101})$. 
For general even $d$, we consider the following ancillary state from \cite{bharoshigh}: 
\[ \ket{A_d} = \frac{1}{\sqrt{d/2}}\sum_{r=0}^{d/2-1} \ket{\mathbf{0+2r}, \mathbf{1+2r}, \dots, \mathbf{d-3+2r}},\]
where the indices are taken modulo $d$. 
This is a state of $d-2$ linear optical qudits ($d-2$ photons across $d-2$ ports, each with $d$ time bins). The superposition will always have $d/2$ terms, and the $r$th term has qubits in all possible computational basis states except for $\ketb{d+2r-2}, \ketb{d+2r-1}$. 
For example, for $d=6$ we have
\[\ket{A_6} = \frac{1}{\sqrt{3}}\left( \ketb{0123} + \ketb{2345} + \ketb{4501} \right).\]
We discuss the construction of such ancillary states in Section~\ref{sec:ancilla}. 
Like above, the state may be simplified by omitting empty modes: namely, one can remove the odd-indexed modes from the even-indexed qudits and vice versa. 
By removing these redundant modes and applying cyclic shifts, we see that $\ket{A_d}$ is equivalent to a $(d-2)$-GHZ state in $\frac{d}{2}$ dimensions. (So $\ket{A_4}$ is a $2$-dimensional Bell pair, $\ket{A_6}$ is a $3$-dimensional $4$-GHZ state, etc.) 

\begin{protocol}\label{proto:even} For $d$ even, the protocol of \cite{bharoshigh} proceeds as follows: 
\begin{enumerate}
    \item Input arbitrary qudits in ports $0$ and $1$, and the ancillary state $\ket{A_d}$ occupying ports $2$ through $d-1$. 
    \item Permute the time bins of port $1$ according to the permutation $(01)(23)\cdots (d-2,d-1)$. 
    \item Perform a Fourier projection as in Section~\ref{sec:fourier}. 
    \item Adjust phases appropriately according to the obtained measurement pattern. 
\end{enumerate}
\end{protocol}

\begin{theorem}\label{thm:even}\cite{bharos2024efficienthighdimensionalentangledstate, bharoshigh}
    Protocol~\ref{proto:even} performs a Type-II fusion with success probability $2/d^2$. 
\end{theorem}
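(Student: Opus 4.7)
The plan is to compute the amplitude of each successful Fourier projection outcome using Lemma~\ref{lemma:fourier projection}, verify that every successful outcome projects onto a maximally entangled state correctable to $\ket{B_0}$ by a known single-qudit diagonal unitary, and then sum the squared magnitudes to obtain the total success probability $2/d^2$.

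First I would write arbitrary inputs $\ket{\alpha}=\sum_a\alpha_a\ketb{a}$ and $\ket{\beta}=\sum_b\beta_b\ketb{b}$ in ports $0$ and $1$; the permutation $\pi=(01)(23)\cdots(d-2,d-1)$ sends $\ket{\beta}$ to $\sum_c\beta_{\pi(c)}\ketb{c}$. Expanding $\ket{A_d}$, its $r$-th term occupies $d-2$ specific time bins across the ancillary ports, leaving a pair-shaped gap $\{\sigma_r,\tau_r\}:=\{d-2+2r,d-1+2r\}\pmod{d}$. By Lemma~\ref{lemma:fourier projection}, a successful Fourier outcome requires $k_0,\dots,k_{d-1}$ to be distinct, so only input configurations with $(a,c)\in\{(\sigma_r,\tau_r),(\tau_r,\sigma_r)\}$ contribute for each $r$. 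The critical observation is that $\pi$ swaps $\sigma_r$ and $\tau_r$ (they form a pair $\{2s,2s+1\}$), so the two surviving configurations for each $r$ collapse to $\alpha_{\sigma_r}\beta_{\sigma_r}$ and $\alpha_{\tau_r}\beta_{\tau_r}$; summing over $r$ (the pairs partition $\{0,\dots,d-1\}$) gives an amplitude proportional to $\sum_{c=0}^{d-1}\alpha_c\beta_c\,\omega^{-cM_q}\rho_q(c)$, where $M_q\in\mathbb{Z}/d$ depends on $q$ and $\rho_q(c)$ is a parity-dependent unit-modulus phase. Since $|\rho_q(c)|=1$, this equals $\sqrt{d}\,\langle B_q|\alpha\otimes\beta\rangle$ for a maximally entangled $\ket{B_q}$ differing from $\ket{B_0}$ by a diagonal unitary; this supplies the Step-4 correction.

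For the total success probability, rather than resum the phase-laden amplitudes term by term, I would compute $\sum_q|\mathrm{amp}_q|^2=\|P_F\ket{\alpha\otimes\beta'\otimes A_d}\|^2$ directly, where $P_F$ is the projector onto the Fourier-success subspace. Because the distinct $(r,(a,c))$ configurations produce mutually orthogonal Fock states in the support of $\ket{A_d}$, this norm evaluates cleanly to $\frac{2}{d}\sum_c|\alpha_c|^2|\beta_c|^2$; for target qudits with maximally mixed reduced state (the FBQC setting, or equivalently Haar-random input), this is exactly $2/d^2$. The main obstacle is the phase tracking in the first stage: the permutation $\pi$ must be chosen to align exactly with the pair structure of the ancilla gaps $\{\sigma_r,\tau_r\}$, so that both surviving orderings per $r$ collapse to the diagonal $\alpha_c\beta_c$ form; a different permutation would leave off-diagonal products $\alpha_a\beta_c$ with $a\neq c$, destroying the Bell structure. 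Verifying this alignment for every outcome $q$, and confirming that $\rho_q(c)$ always has unit modulus so that the projection is truly maximally entangled rather than merely entangled, is the technical heart of the proof.
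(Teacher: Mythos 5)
Your proposal is correct, and its structural core coincides with the paper's proof: you invoke the same Lemma~\ref{lemma:fourier projection}, make the same observation that each term of $\ket{A_d}$ leaves a two-value ``gap'' $\{\sigma_r,\tau_r\}$ that the two input photons must fill (with no interference across the $d/2$ ancilla terms, since the resulting Fock configurations are distinct), and identify the same role for the permutation $(01)(23)\cdots(d-2,d-1)$ in collapsing the surviving terms to the diagonal form $\alpha_c\beta_c$. Where you genuinely diverge is the final tally. The paper fixes the single pattern $q=(0,\dots,0)$, computes its projection weight $\frac{1}{d/2}\cdot\frac{1}{d^d}\cdot d = 2/d^d$, argues that all other patterns differ only by correctable phases, and multiplies: $\frac{2}{d^{d+2}}\times d^d = 2/d^2$ for maximally mixed input. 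You instead sum $|\mathrm{amp}_q|^2$ over all $d^d$ patterns at once, using the fact that distinct surviving Fock configurations have Fourier phases that cancel in the cross terms (equivalently, that the success subspace is exactly the one-photon-per-time-bin subspace, which the interferometer $F_d\otimes I$ preserves), obtaining the input-dependent expression $\frac{2}{d}\sum_c|\alpha_c|^2|\beta_c|^2$ before averaging. Your route is slightly heavier in bookkeeping but buys a finer statement: it exhibits explicitly how the success probability depends on the input state, which makes rigorous the ``random input'' convention that the paper itself only flags as an often-unstated assumption in Appendix~\ref{app:general fusion} (e.g., fusion succeeds with probability $2/d$ for inputs supported on the diagonal and fails with certainty for inputs with disjoint supports). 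Your phase bookkeeping ($M_q=\sum_j q_j$ with a parity-dependent residual phase) is also consistent with what one obtains by expanding the lemma directly, so the claimed Step-4 diagonal correction is indeed available for every outcome $q$.
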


We now give the proof of this result, following \cite{bharos2024efficienthighdimensionalentangledstate, bharoshigh}, since we will want to extend it in the following section. 
\begin{proof}
Following Lemma~\ref{lemma:fourier projection}, if we obtain measurement pattern $(q_0, \dots, q_{d-1})$, the resulting projection is onto the state
\begin{align}
    (I_d\otimes I_d\otimes \bra{A_d})\ket{f_d^{(q)}}.
\end{align}
The $r$th term of $\ket{A_d}$ is missing the values $d+2r-2, d+2r-1$; thus the remaining states must be in the appropriate evenly weighted (but phased) superposition of $\ketb{d+2r-2, d+2r-1}$ and $\ketb{d+2r-1, d+2r-2}$. Ranging over all $0\leq r\leq d/2$, each such term appears exactly once, so there can be no interference. The projection is onto an evenly weighted superposition of 
\begin{equation}
    \ketb{01}, \ketb{10}, \ketb{23}, \ketb{32}, \dots. 
\end{equation}
For general measurement patterns $q=(q_0, \dots, q_{d-1})$, the appropriate phases may be determined and corrected using Lemma~\ref{lemma:fourier projection}. 
Up to these easily correctable phases, it suffices to consider the case $q=(0,\dots, 0)$, in which we project onto the normalized state 
\begin{equation}\label{eq:normalized projector}
    \frac{1}{\sqrt{d}}\left(\ketb{01}+\ketb{10}+\ketb{23} +\ketb{32}+\dots\right)
\end{equation}
with weight
\begin{equation}
    \frac{1}{d/2}\times \frac{1}{d^d}\times d = \frac{2}{d^d}.
\end{equation}
Here, the first factor is the scalar from $\ket{A_d}$, the second is the scalar from Lemma~\ref{lemma:fourier projection}, and the third comes from normalizing \eqref{eq:normalized projector}. 
Note, accounting for the permutation of port $1$ described in Protocol~\ref{proto:even}, we in fact project onto the ideal Bell state $\ket{B_0}$. 

Now, given a maximally mixed two-qudit input state, the probability of projecting onto $\ket{B_0}$ (one vector in a $d^2$-dimensional subspace) is $2/d^{d+2}$. 
Since there are $d^d$ successful measurement patterns $q = (q_0, \dots, q_{d-1})$, the total probability of projecting onto $\ket{B_0}$ (modulo phase corrections) is 
\begin{equation}
    \frac{2}{d^{d+2}}\times d^d = \frac{2}{d^2}.
\end{equation}
\end{proof}

\begin{figure}[htb]
    \centering
    \includegraphics[width=0.8\linewidth]{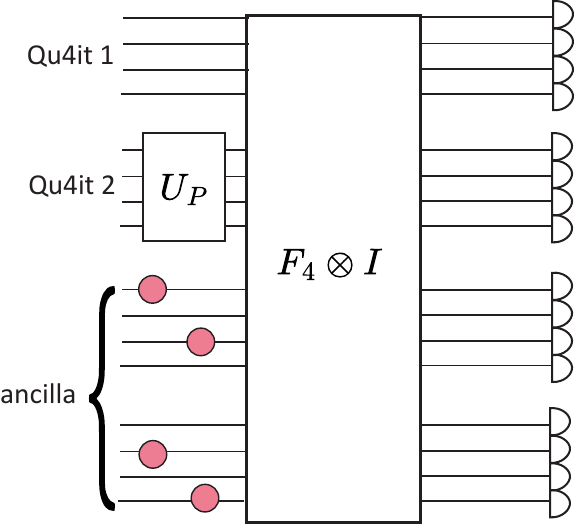}
    \caption{Type-II fusion for $d=4$ using Protocol~\ref{proto:even} \cite{bharos2024efficienthighdimensionalentangledstate, bharoshigh}. The last two input ports contain the ancilla state $\frac{1}{\sqrt{2}}(\ketb{01} + \ketb{23})$. 
    The second input qudit undergoes the permutation $U_P$ described in the protocol, swapping modes $0$ and $1$ and modes $2$ and $3$. 
    Each jth mode of the ith port undergoes a four-dimensional Fourier transform.
    }
    \label{fig:bharos-type-II}
\end{figure}
\subsection{Extension to arbitrary dimension}\label{sec:bharos odd}
We now present our main result, extending the protocol of \cite{bharoshigh} to arbitrary dimensions $d$. 
The protocol outlined in the previous section inherently requires the qudit dimension to be even, and it is unclear how to directly modify it to work with odd-dimensional qudits. 
Instead, if our dimension $d$ is odd, we simply choose some even $D>d$, embed our $d$-dimensional qudits into a $D$-dimensional subspace by adding additional vacuum modes (time bins), and apply Protocol~\ref{proto:even} for dimension $D$ (using the ancilla $\ket{A_{D}}$). 
This is most efficient when we take $D=d+1$. This protocol in the case $d=3$, $D=4$ is depicted in Fig.~\ref{fig:odd_dims}. 

\begin{protocol}\label{proto:odd}
    Let $d$ be arbitrary, and let $D\geq d$ be even. 
    We consider a state with two $d$-dimensional input qudits labeled $0$ and $1$. Our protocol for Type-II fusion is: 
    \begin{enumerate}
        \item Append $D-d$ additional vacuum modes, labeled $d, d+1, \dots, D-1$, 
        to each of the two input qudits. 
        \item Permute the time bins of the extended $D$-dimensional qudit $1$ according to the permutation $(01)(23)\cdots (D-2,D-1)$. 
        \item Perform a $D$-dimensional Fourier projection as in Section~\ref{sec:fourier}, with ancillary state $\ket{A_{D}}$. 
        \item Adjust phases appropriately according to the obtained measurement pattern. 
    \end{enumerate}
\end{protocol}

\begin{theorem}\label{thm:odd}
    For $d$ arbitrary and $D\geq d$ even, Protocol~\ref{proto:odd} is a Type-II fusion with success probability $\dfrac{2}{dD}$. In particular, for $d$ odd and $D=d+1$, we have success probability $\dfrac{2}{d(d+1)}$. 
\end{theorem}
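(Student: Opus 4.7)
The plan is to view Protocol~\ref{proto:odd} as nothing more than Protocol~\ref{proto:even} in dimension $D$, applied to $d$-dimensional inputs that have been embedded in the $D$-dimensional state space by adjoining $D-d$ vacuum time bins. Theorem~\ref{thm:even} then hands us the entire analysis of the Fourier projection and phase corrections; all that remains is to account for two effects of the embedding: reinterpreting the resulting projection as a $d$-dimensional Bell measurement, and recomputing the success probability over $d$-dimensional (rather than $D$-dimensional) inputs.

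For the first, I would observe that after a successful measurement outcome $q$ and the associated phase correction, Theorem~\ref{thm:even} guarantees an effective projection of the two fused qudits onto $\ket{B_0^{(D)}} = \tfrac{1}{\sqrt{D}}\sum_{k=0}^{D-1}\ketb{kk}$. For any input $\ket{\psi}$ supported entirely in the embedded $d$-dimensional subspace, only the terms with $k<d$ contribute, so
\[
\langle B_0^{(D)}|\psi\rangle = \sqrt{d/D}\,\langle B_0^{(d)}|\psi\rangle.
\]
The effective projection is therefore a genuine $d$-dimensional Bell measurement onto $\ket{B_0^{(d)}}$, rescaled by $\sqrt{d/D}$. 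The corrections produced by Theorem~\ref{thm:even} are diagonal (powers of $Z$) in dimension $D$, so they preserve the span of $\{\ketb{k}\}_{k<d}$ and can be realized on the surviving $d$-dimensional qudits by ordinary phase shifters; no $X$-type shifts that would leak out of the embedded subspace are needed.

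For the probability, I would mirror the computation in the proof of Theorem~\ref{thm:even}. Each of the $D^D$ successful measurement patterns contributes weight $2/D^D$ toward a projection onto $\ket{B_0^{(D)}}$ (up to its $Z$-type correction). For $d$-dimensional maximally mixed input $\rho = \Pi_d/d^2$, we have $\langle B_0^{(D)}|\rho|B_0^{(D)}\rangle = (d/D)\cdot(1/d^2)$, and invariance of $\rho$ under the $Z$-type corrections makes this value independent of $q$. Each pattern then contributes $2/(D^{D+1}d)$, and summing over the $D^D$ patterns gives a total probability of $2/(dD)$; specializing to $D=d+1$ for odd $d$ recovers the stated $2/(d(d+1))$. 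The main conceptual obstacle is simply recognizing that the $D$-dimensional protocol acts cleanly on $d$-dimensional inputs with only a $d/D$ rescaling of the Bell overlap; once this is identified, everything reduces to careful bookkeeping of that factor together with a direct appeal to Theorem~\ref{thm:even}.
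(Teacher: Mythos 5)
Your proposal is correct and follows essentially the same route as the paper's proof: both treat Protocol~\ref{proto:odd} as the $D$-dimensional Protocol~\ref{proto:even} applied to embedded inputs, restrict the $D$-dimensional Bell projection to the $d$-dimensional subspace (picking up the factor $d/D$), compute the per-pattern probability $2/(dD^{D+1})$, and sum over the $D^D$ successful patterns. Your added remark that the corrections are diagonal and hence preserve the embedded subspace is a nice explicit justification of what the paper compresses into ``modulo correctable phases,'' but it does not change the argument.
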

\begin{proof}
    With the exception of the first step, this is simply the $D$-dimensional Type-II fusion protocol. 
    By the proof presented in Section~\ref{sec:bharos even}, this protocol projects onto the following unnormalized state (modulo correctable phases): 
    \begin{equation}\label{eq:bharos projector original}
        \left(\dfrac{2}{D^{D}}\right)^{1/2}\times \frac{1}{\sqrt{D}}\sum_{k=0}^{D-1}\ketb{kk}.
    \end{equation}
    We are interested in the \emph{effective projection} for input states occupying only the first $d$ modes of each qudit. 
    In other words, the effective projection is given by taking \eqref{eq:bharos projector original} and dropping any terms utilizing the computational basis states $\ketb{k}$ for $k\geq d$. 
    Doing this and reorganizing the coefficient, the effective projection is onto
    \begin{align}
        & \left(\dfrac{2}{D^{D}}\right)^{1/2}\times \frac{\sqrt{d}}{\sqrt{D}}\times \frac{1}{\sqrt{d}}\sum_{k=0}^{d-1}\ketb{kk}
        \\=& \left(\dfrac{2d}{D^{D+1}}\right)^{1/2}\ket{B_0},
    \end{align}
    where $\ket{B_0}$ is the $d$-dimensional Bell state. 
    Considering a maximally mixed input state of two $d$-dimensional qudits, the resulting probability is 
    \begin{equation}
        \dfrac{2d}{D^{D+1}}\times \dfrac{1}{d^2} = \dfrac{2}{dD^{D+1}}.
    \end{equation}
    We have $D^{D}$ successful patterns to consider, making the total success probability 
    \begin{equation}
        \dfrac{2}{dD^{D+1}}\times D^{D} = \dfrac{2}{d D}.
    \end{equation}
\end{proof}

\begin{figure}[htb]
    \centering
    \includegraphics[width=0.7\linewidth]{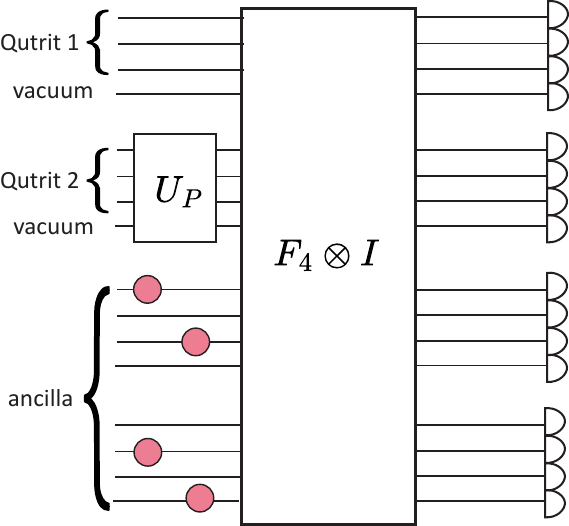}
    \caption{
    Type-II fusion for $d=3$, applying Protocol~\ref{proto:odd}. 
    The fusion protocol for odd dimensions $d$ encodes the qudits into $(d+1)$-dimensional qudits, with one extra mode in a vacuum state. After this step, the circuit and ancilla are the same as used in the $(d+1)$-dimensional case. This extends efficient Type-II fusion to odd dimensions, giving the highest known success probability for odd $d$. }
    \label{fig:odd_dims}
\end{figure}

In Table~\ref{table:odd}, we show the success probability of our protocol for different odd dimensions. 
This is the highest known success probability for every odd dimension. 
For example, with $d=5$, this \textbf{success probability is over 723 times larger} than the analogous probability from \cite{Luo} (the previous record); further, it is still over $22$ times larger than our extension of \cite{Luo} below. 
For further details, see Section~\ref{sec:luo}, especially Table~\ref{table:luo}. 

\begin{table}[ht]
\begin{tabular}{|c|c|}
\hline
$d$ & Success probability  \\ \hline
3 & 0.16                   \\ \hline
5 & 0.066                  \\ \hline
7 & 0.0357                       \\ \hline
\end{tabular}
\caption{For qudit dimensions $d=3,5,7$, we give the success probability of Protocol~\ref{proto:odd} with $D=d+1$. This protocol does not require any extra-dimensional corrections.}\label{table:odd}
\end{table}
\subsection{Constructing the ancilla}\label{sec:ancilla}

\begin{figure*}[htb]
    \includegraphics[width=.85\linewidth]{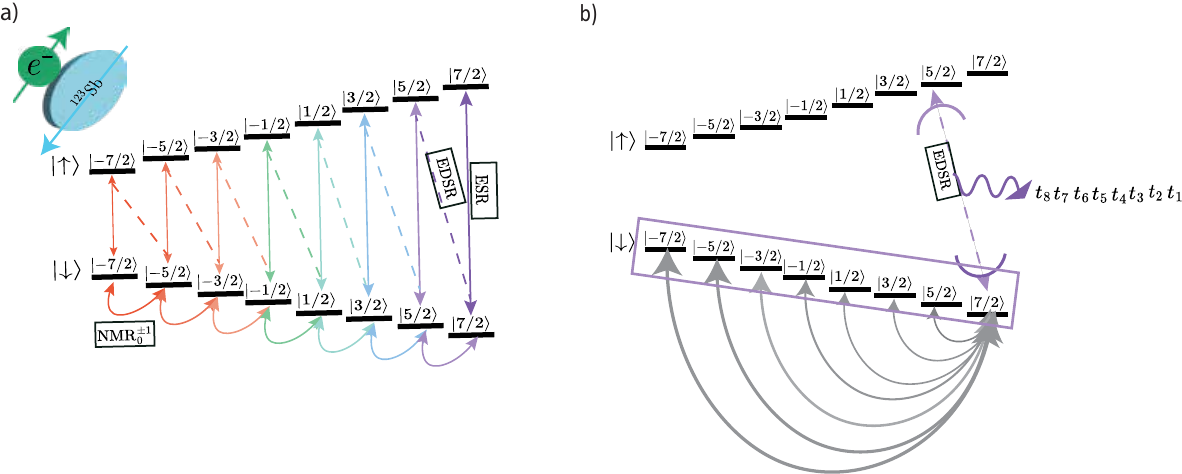}
    \caption{The energy spectrum of the neutral antimony
donor. Antimony has a high nuclear spin, with $I=7/2$. and it has eight energy levels. In its neutral charge state, the electron bound to the donor couples with the nuclear spin, expanding the Hilbert space to form a 16-level system. Curved arrows represent NMR transitions, while ESR is depicted using vertical solid arrows, and EDSR is indicated with dashed arrows. The 0 subscript in NMR emphasizes that the antimony donor is in its neutral charge state. b) Antimony donor is coupled to a microwave cavity. A time-bin protocol is used for controlling multi-mode photonic states. The microwave cavity operates at the EDSR frequency corresponding to the transition between the states $\ket{7/2}\ket{\downarrow} \leftrightarrow \ket{5/2}\ket{\uparrow}$ to enable coherent photon emission. The purple rectangle represents the uniform superposition of all nuclear spin states associated with the electron spin-down state and is used to create the proposed ancilla for qudit dimension $d=16$. The gray arrows indicate the permutation operation between nuclear spin states. $t_1$, $t_2$, $\cdots$, $t_8, \cdots$ represent the time bins into which the photon is emitted. The figure is adapted from~\cite{Ustun2025}.}
\label{fig:ancilla_construction}
\end{figure*}
\citet{bharoshigh} proposed a general method for constructing the ancilla $\ket{A_d}$ required for Protocols~\ref{proto:even} and \ref{proto:odd} using a two-level quantum emitter. 
Here, we provide an example of a silicon spin qudit, following the methods of \cite{Ustun2025}, to construct the desired ancilla in the $d=4$ case. 
Higher dimensions $d$ follow using a similar method and are discussed at the end of the section. 
Recall that, for $d$ odd, we use the even-dimensional ancilla $\ket{A_{d+1}}$, so it suffices to consider $d$ even. 
The $d=4$ example given here, then, also suffices for Protocol~\ref{proto:odd} with $d=3$. 
The required ancilla is $\ket{A_4} = \frac{1}{\sqrt{2}}(\ket{1000,0100} + \ket{0010,0001})$, recalling from Section~\ref{sec:notation} that we often use a comma to separate modes corresponding to different linear optical qudits. 
This is a superposition of $2$ terms, each involving two photons. In fact, as discussed above, the state $\ket{A_4}$ is equivalent to a dual-rail Bell pair. However, we discuss this case in detail since it readily generalizes to arbitrary $d$. 

Our proposed silicon-based quantum device utilizes the antimony donor \({}^{123}\mathrm{Sb}\), which possesses a high nuclear spin of \(I = 7/2\). This nuclear spin alone spans an 8-dimensional Hilbert space, as determined by  \(2I + 1 = 8\). As a group V element with five valence electrons, \({}^{123}\mathrm{Sb}\) behaves like a hydrogenic impurity and, at cryogenic temperatures, binds an additional electron (see Fig.~\ref{fig:ancilla_construction}a). The inclusion of this bound electron expands the total system to a 16-dimensional Hilbert space~\cite{Ustun2025}. We can control the 16-dimensional Hilbert space by applying oscillating magnetic and electric fields. For example, applying an oscillating magnetic field allows us to flip the electron spin, a process known as Electron Spin Resonance (ESR), or to change the nuclear spin projection by one quantum of angular momentum, a phenomenon referred to as Nuclear Magnetic Resonance (NMR). Alternatively, the application of an oscillating electric field induces electric dipole spin resonance (EDSR) transitions in the neutral donor. This enables transitions between anti-parallel spin states, such as \(\ket{\downarrow_e \uparrow_n} \leftrightarrow \ket{\uparrow_e \downarrow_n}\) (the dashed lines in Fig.~\ref{fig:ancilla_construction}a). The energy diagram of antimony is shown in Fig.~\ref{fig:ancilla_construction}a.

To create the desired ancilla, our objective is to encode two photons across multiple modes. To achieve this, we propose incorporating a microwave cavity into the device architecture, from which a photon is emitted using a time-bin multiplexing scheme as described in~\cite{Ustun2025}. In this scenario, the time bin into which the photon (of
a single fixed frequency) is emitted represents the designated parameter, and we can use one of the system frequencies for the microwave cavity to operate.   See Fig.~\ref{fig:ancilla_construction}b.

We start with a chosen initial state $\ket{\psi_0} = \ket{7/2}\ket{vac}\ket{\downarrow}$, where $\ket{vac}$ represents the cavity vacuum state and $\ket{\downarrow}$ represents the electron spin-down state. In the case $d=4$ (and $d=3$), we need eight time bins in total.  More specifically,
\[
\ket{A_4} = \frac{1}{\sqrt{2}}(\mid\overset{\overset{{t_1}}{\downarrow}}{1}\overset{\overset{{t_2}}{\downarrow}}{0}\overset{\overset{t_3}{\downarrow}}{0}\overset{\overset{t_4}{\downarrow}}{0}
        \overset{\overset{t_5}{\downarrow}}{0}\overset{\overset{t_6}{\downarrow}}{1}\overset{\overset{t_7}{\downarrow}}{0}\overset{\overset{t_8}{\downarrow}}{0}>+ \mid\overset{\overset{{t_1}}{\downarrow}}{0}\overset{\overset{{t_2}}{\downarrow}}{0}\overset{\overset{t_3}{\downarrow}}{1}\overset{\overset{t_4}{\downarrow}}{0}
        \overset{\overset{t_5}{\downarrow}}{0}\overset{\overset{t_6}{\downarrow}}{0}\overset{\overset{t_7}{\downarrow}}{0}\overset{\overset{t_8}{\downarrow}}{1}>),
\] where the photons only occupy four time-bins\textemdash the first, third, sixth and eighth\textemdash the others are vacuum. 
We will use two energy levels of antimony.

In the spin-down state of the electron, we create a uniform superposition of the two energy levels ($\ket{7/2}$ and $\ket{5/2}$) using a Hadamard operation. The resultant state will be 

\[\ \begin{split}
        \ket{\psi_1}=\frac{1}{\sqrt{2}}\big(& \ket{7/2} + \ket{5 /2}  \big) \ket{\downarrow}\ket{vac}.
    \end{split}\]

An EDSR pulse is then applied, which flips the spin of the electron
conditioned on the nucleus being in state $\ket{7/2}$. The state becomes:
\[\ \begin{split}
        \ket{\psi_2}=\frac{1}{\sqrt{2}} \ket{7/2}\ket{\uparrow}\ket{vac} + \ket{5 /2}  \ket{\downarrow}\ket{vac}.
    \end{split}\]
Since we use a single cavity that operates on the EDSR pulse corresponding to the transition \( \ket{7/2}\ket{\downarrow} \leftrightarrow \ket{5/2}\ket{\uparrow} \), and only this EDSR pulse is used, only the population in \( \ket{7/2} \) will be excited to the spin-up state of the electron, and not \( \ket{5/2}\ket{\downarrow} \)(See Fig.~\ref{fig:ancilla_construction}).
The electron experiences a coherent exchange of energy with the microwave cavity such that at time $t_1$, the electron is in the down state and the cavity is populated with a photon of frequency $\omega$: 
\[\ \begin{split}
        \ket{\psi_3}=\frac{1}{\sqrt{2}} \ket{7/2}\ket{\downarrow}\ket{\omega}_{t_1} + \ket{5 /2}  \ket{\downarrow}\ket{vac}.
    \end{split}\]
A permutation operation, in the case of two levels, is simply a single NMR pulse used to swap the population between nuclear states. This operation is applied between $\ket{7/2}$ and $\ket{5/2}$ with the electron in $\ket{\downarrow}$, leaving the system in the state:
\[
    \begin{split}
        \ket{\psi_4}=&\frac{1}{\sqrt{2}}\ket{5/2} \ket{\downarrow}\ket{\omega}_{t_1}+\ket{7/2}\ket{\downarrow}\ket{vac}_{t_1}
    \end{split}.
\]
The reason we permute the populations is because we are using a single cavity which operates the EDSR frequency belonging to the state $\ket{7/2}$. 

To create the desired ancilla, we create a superposition of two terms: a photon is emitted either in the first time-bin or in the third time-bin. 
At this stage, we must utilize the third mode—corresponding to the third time bin—without emitting a photon in the preceding time bin. Since we are emitting photons coherently using a cavity, everything is clocked, and we simply wait until the 3rd time bin without performing any operation. Then, in the 3rd time bin, we apply the same EDSR frequency to flip the electron spin and emit the photon coherently using the microwave cavity. This is simply a repetition of the procedure carried out during the first time bin. The resultant state then becomes:
\[
    \begin{split}
        \ket{\psi_5}=\frac{1}{\sqrt{2}}\ket{5/2} \ket{\downarrow}\ket{\omega}_{t_1}+ &\ket{7/2}\ket{\downarrow}\ket{\omega}_{t_3}  
    \end{split}
\]
\textit{Note: For simplicity, going forward we will generally omit  
the \( \ket{\text{vac}} \) states. If included, we would have}
\[
   \begin{split}
     \ket{\psi_5}=\frac{1}{\sqrt{2}}(&\ket{5/2} \ket{\downarrow}\ket{\omega}_{t_1}\ket{vac}_{t_2}\ket{vac}_{t_3}+ \\ &\ket{7/2}\ket{\downarrow}\ket{vac}_{t_1}\ket{vac}_{t_2}\ket{\omega}_{t_3} ). 
   \end{split}
\]

For the final time bin, we simply wait without pulsing the system. The resultant state has a photon in a superposition of the first and third time bins.  
Neglecting the frequency information, we may write $\ket{w_1}_{t_1}$ and $\ket{w_1}_{t_3}$ in the Fock basis, so the resultant state becomes:
\[
    \begin{split}
        \ket{\psi_6}=\frac{1}{\sqrt{2}}\ket{5/2} \ket{\downarrow}\ket{1000}+ &\ket{7/2}\ket{\downarrow}\ket{0010}  
    \end{split}
\]
We then repeat this process\textemdash applying a permutation operation, along with an EDSR pulse conditioned on the nuclear state being state $\ket{7/2}$, then emitting a photon using the microwave cavity\textemdash for time bins 6 and 8. 
The final state becomes:
\[
\begin{split}
      \ket{\psi_{final}} =&\frac{1}{\sqrt{2}}( \ket{5/2} \ket{\downarrow}\ket{1000}\ket{vac}_{t_5}\ket{\omega}_{t_6}\ket{vac}_{t_7}\ket{vac}_{t_8}+ \\ &\ket{7/2}\ket{\downarrow}\ket{0010} \ket{vac}_{t_5}\ket{vac}_{t_6}\ket{vac}_{t_7}\ket{\omega}_{t_8}) 
     \end{split},
\] 
or, written in the Fock basis,
\[
\begin{split}
     \ket{\psi_{final}} =  \frac{1}{\sqrt{2}}(\ket{5/2} \ket{\downarrow}\ket{1000,0100} +&\ket{7/2}\ket{\downarrow}\ket{0010,0001})
     \end{split}.
\]
We then apply a Hadamard operation between the states of antimony to obtain
\[
\begin{split}
     \ket{\psi_{final'}} =  & \frac{1}{2}(\ket{7/2}-\ket{5/2})\ket{\downarrow}\ket{1000,0100}+ \\ &  \frac{1}{2}(\ket{7/2} + \ket{5/2})\ket{\downarrow}\ket{0010,0001}).
     \end{split}
\]
If we now measure the antimony, decoupling it from the photons, we will project onto one of the two states
\[
\begin{split}
\ket{\psi_{aux}} = \frac{1}{\sqrt{2}}\left(\ket{1000,0100} + \ket{0010,0001}\right) = \ket{A_4}
\end{split}
\]
or
\[
\begin{split}
\ket{\psi_{aux}} = \frac{1}{\sqrt{2}}\left(\ket{1000,0100} - \ket{0010,0001}\right),
\end{split}
\] where the phase is heralded and can be corrected as usual. 

This method is only a small version of the protocol which is studied in~\cite{Ustun2025}.
The Hamiltonian of the presented physical system can be found in~\ref{app:Hamiltonian}. 
For qudit dimension $d=6$ (and $d=5$), the ancilla consists of a four-photon state in 24 time bins. Each grouping of $6$ modes (corresponding to a qudit) can have photons in one of three different time bins. In this situation, we should use three energy levels of antimony such as  $\ket{7/2}$, 
$\ket{5/2}$, and $\ket{3/2}$, instead of using only two energy levels. For qudit dimensions $d > 4$, the permutation operation can either be performed by sequential NMR pulses on the antimony or by using more sophisticated control schemes, such as the global rotation described in~\cite{Yu2025}.

In Fig.~\ref{fig:ancilla_construction}b, a more general use of this protocol is shown for a qudit dimension of $d = 16$. In this case, 224 time bins are required, and in each term, each qudit has a photon in one of eight different time bins. 
This necessitates a uniform superposition over all the ground states, while the remaining modes remain in the vacuum state. 

For qudit dimensions $d > 16$, one can create a uniform superposition using spin-down and spin-up states of the electron together. In theory, for a single antimony donor, it is possible to create an ancilla for qudit dimensions up to $d = 28$, where the ancilla would require 14 photons across 728 time bins. This is, of course, a theoretical hypothesis that disregards any experimental limitations, such as the coherence time of antimony or the time scale of the permutation operation. In practice, the creation of such an ancilla state on current antimony hardware would likely be a challenge even for $d = 6$. 
While this system offers a high degree of control, is well understood, and has already been used in several experiments~\cite{Fern_ndez_de_Fuentes_2024, Yu2025}, it is not necessarily the optimal platform for this purpose. 
At present, we aim only to present a concrete and physically realizable system for the construction of the ancillary states $\ket{A_d}$ required for Protocol~\ref{proto:odd}. 

\subsection{Feasibility of the Protocol}\label{sec:implement}
We discuss the feasibility of near-term implementations of Protocols~\ref{proto:even} and \ref{proto:odd} through three key considerations: detector requirements, circuit complexity, and ancillary entanglement resources.
\begin{enumerate}
    \item \textbf{Detector Requirements:} We begin by examining the assumptions related to PNRD. For arbitrary qudit dimension, the protocols presented in this section only require detectors capable of distinguishing among three categories: vacuum ($0$ photons), single-photon events, and multi-photon events ($2$ or more photons). 
    This is because the Fourier projection (Protocol~\ref{proto:fourier}) post-selects for measurement outcomes in which each mode contains at most one photon. 
    In the absence of multiphoton errors, so that there can be at most $d$ photons present in the system, it is sufficient to distinguish between vacuum and non-vacuum events. 
    This requirement mirrors that of fusion-based protocols in the qubit setting and is within the reach of current experimental capabilities. For example, recent works~\cite{simone2025, hauser2025boosted} employ a standard pseudo-PNRD approach to demonstrate high-fidelity \emph{boosted} (qubit) fusion, which imposes stricter detection requirements than those needed for the protocols of this section.
    
    \item \textbf{Circuit Complexity:} Next, we evaluate the complexity of the optical circuits. The principal challenge lies in photon loss, which becomes increasingly problematic with higher photon counts and reduces the probability of successful circuit execution. For qudit dimensions $d = 3,4$ (respectively, $d=5,6$), the required interferometric circuit can be implemented using a mesh of Mach-Zehnder interferometers involving $4$ photons in $16$ modes (respectively $6$ photons in $36$ modes) \cite{barak2007quantum, clements2016optimal}. The case of $4$ photons is comfortably within the current experimental capabilities; for example, QuiX Quantum has demonstrated a $20$ mode reconfigurable linear optical processor with low loss, high visibility, and high-fidelity operations~\cite{smith2022universal}. Although loss limits the total number of usable photons, $4$-photon experiments are routinely performed, as in~\cite{simone2025, hauser2025boosted}, where boosted qubit teleportation was achieved using $4$--$5$ photons in $8$ modes. The $6$-photon case is also considered feasible with current or near-term hardware. For example, as early as 2016, a $10$-photon entangled state was experimentally realized with sufficient fidelity to certify genuine multipartite entanglement~\cite{wang2016experimental}.

    \item \textbf{Ancillary Entanglement Resources:} Finally, we consider the generation of the ancillary entangled states required. Section~\ref{sec:ancilla} outlines a method for generating the necessary ancillae using a $d$-level quantum emitter. In particular, the antimony donor system discussed here and in~\cite{Ustun2025} is realistically capable of generating such states up to dimension $d=8$, and potentially up to $d=16$ with improved coherence properties. Near-term feasibility is expected for dimensions up to $d \leq 6$. Experimental control of antimony-based systems has been demonstrated in related work~\cite{Fern_ndez_de_Fuentes_2024,Yu2025}.

    For small qudit dimensions, alternative ancilla-generation strategies are available. As discussed in Section~\ref{sec:bharos even}, in the cases of $d=3$ and $d=4$, the required ancilla reduces to a standard \emph{two-dimensional} linear optical Bell pair; these have been routinely generated in experiments for decades, e.g. \cite{kwiat1995new, michler1996interferometric}. 
    
\end{enumerate}

\section{Type-II Fusion Protocols Using Extra-Dimensional Corrections}\label{sec:all_extra_dims}
We begin by introducing the notion of extra-dimensional corrections, generalizing methods of \cite{Luo} by applying techniques for Procrustean distillation as in \cite{vidal1999entanglement}. 
Similar ideas are considered in \cite{Paesani} in the context of Bell state generation. 

In Section~\ref{sec:luo}, we review the protocol of \cite{Luo}, then extend it and improve its performance using extra-dimensional corrections. 
Afterward, we consider other variants of Type-II fusion that exploit extra-dimensional corrections. These are generally able to improve the success probability of the relevant circuits a great deal. 
All circuits presented here perform worse than our Protocol~\ref{proto:odd} above (or its even-dimensional analogue, Protocol~\ref{proto:even}, due to \cite{bharoshigh}). 
However, these protocols have more than theoretical interest as negative results. 
One major advantage is that the protocols of this section do not require multi-photon entangled states as ancillae: only single photons, W-states, and single-mode bunched states $\ket{k}$ are utilized. 

\subsection{Extra-Dimensional Corrections}\label{sec:USD}
We now consider circuits for Type-II fusion (or quantum teleportation) that do not project onto a Bell state such as $\ket{B_0}$, but rather a state of the form 
\begin{equation}\label{eq:uncorrected state}
    \ket{\Psi} = \frac{1}{\sqrt{d}}\sum_{k=0}^{d-1} \ketb{k}\ket{\psi_k},
\end{equation}
where the $\ket{\psi_k}$ are linearly independent but \emph{not necessarily orthogonal} vectors. 
(Note we do not require the $\ket{\psi_k}$ to be normalized either.) 
In linear optics, where the qudit dimension is easily increased by appending additional modes, we have the flexibility to ``correct" such a state into a Bell pair by performing a POVM on a higher-dimensional space, as described shortly. 
(This is standard in circuits for boosted fusion and entangled linear optical state generation, e.g. Figs. \ref{fig:boosted_type-II}, \ref{fig:boosted_type-II-2}, \ref{fig:paesani}.) 
For practical application of this method in fusion or teleportation, one cannot directly apply such a correction, as the relevant photons are destroyed during the measurement. 
Instead, a related correction is applied on the \emph{other} qudits entangled with the second input qudit. 
We discuss the practical applicability of this method in Section~\ref{sec:corrections_applied}, for now just assuming we want to correct \eqref{eq:uncorrected state} into a Bell state. 

The desired corrections can be shown to exist using the Schmidt decomposition and a theorem of \citet{vidal1999entanglement}: 
\begin{theorem}\label{thm:vidal}
    \cite{vidal1999entanglement} Let $\ket{\Psi}$ and $\ket{\Phi}$ be two-qudit states (of local dimension $d$) with Schmidt decompositions
    \begin{equation}
        \ket{\Psi} = \sum_{i=0}^{d-1} \lambda_i \ket{\alpha_i}\ket{\beta_i},\,\,\, \ket{\Phi} = \sum_{i=0}^{d-1} \mu_i \ket{\gamma_i}\ket{\delta_i},
    \end{equation}
    where we take the $\lambda_i$ and $\mu_i$ to be weakly increasing. 
    The optimal probability of converting $\ket{\Psi}$ into $\ket{\Phi}$ using only local POVMs is given by
    \begin{equation}
        \min_{t=1,\dots, d-1} \dfrac{\sum_{i=0}^{t-1}\lambda_i^2}{\sum_{i=0}^{t-1}\mu_i^2}.
    \end{equation}
\end{theorem}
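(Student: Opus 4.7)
The plan is to establish matching upper and lower bounds on the achievable success probability $p$ of an LOCC conversion $\ket{\Psi} \to \ket{\Phi}$. For the upper bound, I would introduce, for each $1 \leq t \leq d-1$, the quantity
\begin{equation}
    E_t(\ket{\Psi}) = \sum_{i=0}^{t-1} \lambda_i^2,
\end{equation}
where the $\lambda_i$ are the Schmidt coefficients of $\ket{\Psi}$ in weakly increasing order. These are the partial sums of the $t$ smallest squared Schmidt coefficients, equivalently the sums of the $t$ smallest eigenvalues of the reduced density matrix $\rho_A = \mathrm{Tr}_B(\ket{\Psi}\bra{\Psi})$.

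The core lemma is that each $E_t$ is a probabilistic LOCC monotone: if a local protocol produces an ensemble $\{(p_k, \ket{\phi_k})\}$ from $\ket{\Psi}$, then $\sum_k p_k E_t(\ket{\phi_k}) \leq E_t(\ket{\Psi})$. This follows from majorization properties of the eigenvalues of reduced density matrices under local operations, together with Ky Fan's characterization of partial sums of extreme eigenvalues as Schur-concave. Any pure-state bipartite LOCC transformation can be reduced to a one-sided POVM via the Lo--Popescu theorem, so it suffices to check monotonicity for a single-sided POVM, which is a direct majorization calculation on filtered reduced density matrices. Given monotonicity, a protocol converting $\ket{\Psi}$ to $\ket{\Phi}$ with probability $p$ satisfies $p\, E_t(\ket{\Phi}) \leq E_t(\ket{\Psi})$ for every $t$, yielding $p \leq \min_t E_t(\ket{\Psi})/E_t(\ket{\Phi})$.

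For achievability, I would construct an explicit one-sided POVM on Alice's side realizing the optimal probability. Let $t^*$ be an index attaining the minimum and let $p^*$ be its value. Writing $\ket{\Psi}$ in its Schmidt basis, I would design a filter $M$ that partitions Alice's Schmidt subspace into a low-index block ($i < t^*$) and a high-index block ($i \geq t^*$), applying a reweighting on the low block and a contracting map on the high block to produce, after normalization, a state whose Schmidt coefficients are exactly $\mu_i$. Bob then applies a local unitary sending $\ket{\beta_i} \mapsto \ket{\delta_i}$ (and Alice an analogous unitary in her Schmidt basis) to complete the transformation to $\ket{\Phi}$. The choice of $t^*$ as the minimizer is precisely what guarantees $M^\dagger M \leq I$, so that $M$ extends to a valid two-outcome POVM whose success branch occurs with probability $p^*$.

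The main obstacle will be establishing LOCC monotonicity of $E_t$ under arbitrary multi-round, many-party LOCC protocols rather than a single local filter. The cleanest route is the Lo--Popescu reduction, which identifies pure-to-pure LOCC transformations with one-sided POVMs, but this step is nontrivial; one could instead give a direct inductive argument across rounds using the fact that the $E_t$ are concave functions of the local spectrum. The achievability construction is more mechanical but still requires careful bookkeeping to verify that $M^\dagger M \leq I$ follows from the minimizing property of $t^*$ combined with the partial-sum inequalities $\sum_{i < t} \lambda_i^2 \geq p^* \sum_{i < t} \mu_i^2$ that characterize it.
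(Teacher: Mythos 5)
First, a structural remark: the paper does not prove Theorem~\ref{thm:vidal} at all — it is imported with a citation to \cite{vidal1999entanglement} — so your attempt can only be judged against Vidal's original argument. Your upper-bound half follows that argument faithfully and is sound: the partial sums $E_t$ of the $t$ smallest squared Schmidt coefficients are precisely Vidal's entanglement monotones, monotonicity under a one-sided POVM is a majorization computation, and the Lo--Popescu reduction legitimately handles general multi-round LOCC, giving $p \le \min_t E_t(\ket{\Psi})/E_t(\ket{\Phi})$.

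The achievability half, however, has a genuine gap. You propose a two-outcome POVM whose single success branch produces $\ket{\Phi}$, and you assert that minimality of $t^*$ guarantees $M^\dagger M \le I$. This is false, and no single-filter construction can be repaired to work. If one Kraus operator $M$ with $M^\dagger M \le I$ maps $\ket{\Psi}$ to $\ket{\Phi}$ with probability $p$, write $\Lambda = \mathrm{diag}(\lambda_i)$ in the Schmidt bases; the unnormalized output has coefficient matrix $M\Lambda$, so its singular values satisfy $\sqrt{p}\,\mu_i = s_i(M\Lambda) \le \|M\|\,\lambda_i \le \lambda_i$ (both lists sorted consistently). Hence any single success branch obeys $p \le \min_i \lambda_i^2/\mu_i^2$, the minimum of the \emph{pointwise} ratios, whereas the theorem's value is the minimum of the \emph{partial-sum} ratios, which is generically strictly larger. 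Concretely, take $d=3$, $(\lambda_0^2,\lambda_1^2,\lambda_2^2) = (0.2,\,0.2,\,0.6)$ and $(\mu_0^2,\mu_1^2,\mu_2^2) = (0.05,\,0.45,\,0.5)$: the partial-sum ratios are $4$ (at $t=1$) and $0.8$ (at $t=2$), so $p^* = 0.8$ with $t^* = 2$, yet your low-block reweighting would need $m_1^2 = p^*\mu_1^2/\lambda_1^2 = 0.36/0.2 = 1.8 > 1$, and indeed no single filter can exceed $\min_i \lambda_i^2/\mu_i^2 = 4/9 \approx 0.44$. The partial-sum inequalities $\sum_{i<t}\lambda_i^2 \ge p^*\sum_{i<t}\mu_i^2$ simply do not imply the pointwise inequalities $\lambda_i^2 \ge p^*\mu_i^2$ that a single filter requires. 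Any correct protocol must have \emph{several} success branches, as in Vidal's construction. One clean repair: first convert $\ket{\Psi}$ deterministically (Nielsen's theorem — itself a multi-outcome protocol with unitary corrections by Bob) into an intermediate state $\ket{\Omega}$ with squared Schmidt coefficients $\omega_i^2 = p^*\mu_i^2$ for $i < d-1$ and $\omega_{d-1}^2 = 1 - p^*(1-\mu_{d-1}^2)$; the required majorization $\lambda^2 \prec \omega^2$ holds \emph{exactly because} $p^*$ is the minimum of the partial-sum ratios, and then a single filter attenuating only the largest Schmidt component converts $\ket{\Omega}$ into $\ket{\Phi}$ with probability $p^*$. (This repair, like Vidal's statement, implicitly uses $p^* \le 1$; note that the paper's formulation, which omits the $t=d$ term from the minimum, can formally exceed $1$ and should be capped there.)
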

Taking $\ket{\Phi}$ in the theorem to be a Bell pair, we have all $\mu_i^2 = 1/d$, and we obtain 
\begin{corollary}
    A two-qudit state with Schmidt decomposition $\ket{\Psi} = \sum_i \lambda_i \ket{\alpha_i}\ket{\beta_i}$ can be corrected into a Bell pair using local POVMs with probability $d\lambda_0^2$, where $d$ is the qudit dimension and $\lambda_0$ is the smallest Schmidt coefficient. 
\end{corollary}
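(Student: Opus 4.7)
The plan is to apply Theorem~\ref{thm:vidal} directly, taking $\ket{\Phi}$ to be the Bell pair $\ket{B_0} = \frac{1}{\sqrt{d}}\sum_{k=0}^{d-1}\ketb{kk}$. The first step is to identify the Schmidt decomposition of $\ket{B_0}$: it is already in Schmidt form, with all coefficients equal, so $\mu_i = 1/\sqrt{d}$ and $\mu_i^2 = 1/d$ for every $0 \leq i \leq d-1$. In particular, $\sum_{i=0}^{t-1}\mu_i^2 = t/d$.

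Substituting this into the formula of Theorem~\ref{thm:vidal}, the optimal probability of converting $\ket{\Psi}$ into a Bell pair by local POVMs becomes
\[
\min_{t=1,\dots,d-1} \dfrac{\sum_{i=0}^{t-1}\lambda_i^2}{t/d} \;=\; \min_{t=1,\dots,d-1} \dfrac{d}{t}\sum_{i=0}^{t-1}\lambda_i^2.
\]
The remaining step is to locate this minimum. Since the Schmidt coefficients are assumed to be weakly increasing, $\lambda_i^2 \geq \lambda_0^2$ for every $i$, so $\sum_{i=0}^{t-1}\lambda_i^2 \geq t\lambda_0^2$, and hence the ratio $\frac{d}{t}\sum_{i=0}^{t-1}\lambda_i^2$ is bounded below by $d\lambda_0^2$. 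Equality is attained at $t=1$, which yields the claimed probability $d\lambda_0^2$.

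There is no genuine obstacle here: the corollary is an immediate specialization of Theorem~\ref{thm:vidal}, and the monotonicity argument is entirely elementary. The only point worth a brief sanity check is that Theorem~\ref{thm:vidal} ranges $t$ only over $\{1,\dots,d-1\}$; extending the minimum to include $t=d$ would contribute the ratio $1$, which is at least $d\lambda_0^2$ by normalization ($\sum_i \lambda_i^2 = 1$) together with the smallest-coefficient property, so the minimum is unchanged and the corollary follows.
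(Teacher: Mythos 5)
Your proof is correct and follows the same route as the paper: specialize Theorem~\ref{thm:vidal} to $\ket{\Phi}$ a Bell pair with all $\mu_i^2 = 1/d$, and observe that the weakly increasing ordering of the $\lambda_i$ forces the minimum to occur at $t=1$, giving $d\lambda_0^2$. The paper states this specialization without spelling out the elementary minimization; your added details (including the $t=d$ sanity check) are consistent with and merely elaborate the paper's argument.
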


In our setting, one may directly use the singular value decomposition (which underlies the Schmidt decomposition) to explicitly construct the POVM required to correct the state \eqref{eq:uncorrected state}. 
Further, it suffices to act on only the second qudit. 
Naively, one would like to apply the transformation
\begin{equation}\label{eq:matrixA}
    A = \sum_k \ketb{k}\bra{\psi_k^\perp},
\end{equation}
where the $\bra{\psi_i^\perp}$ are a dual basis defined by $\braket{\psi_i^\perp}{\psi_j} = \delta_{ij}$. 
(Note $A$ is simply the inverse of the matrix whose columns are the $\ket{\psi_k}$; this is well-defined because we assume $\{\ket{\psi_i}: 0\leq i\leq d-1\}$ is an independent set.) 
This satisfies $(I\otimes A)\ket{\Psi} = \ket{B_0}$, and $A$ is invertible. However, it is not unitary in general because the $\ket{\psi_k}$ are not necessarily orthogonal (or unit vectors). 
However, we may extend (a nonzero multiple of) $A$ to a unitary by increasing the qudit dimension, as follows: 
\begin{theorem}\label{thm:filip corrections}
    Let $B = (\braket{\psi_j}{\psi_i})_{ij}$ be the Gram matrix, with smallest eigenvalue $\lambda$. 
    Let $s\leq d-1$ be the number of eigenvalues of $B$ that are \emph{not} equal to $\lambda$. 
    If we augment the second qudit with $s$ extra dimensions (modes), there exists a unitary $U$ extending the matrix $\sqrt{\lambda} A$, with 
    \begin{equation}
        (I\otimes U)\ket{\Psi} = \frac{\sqrt{\lambda}}{\sqrt{d}}\sum_k \ketb{kk} + \ket{\textnormal{junk}}.
    \end{equation}
    Here, in each term of the ``junk" state, the second qudit's photon is always in one of the newly added modes. 
    Performing PNRD on the $s$ new modes, we post-select for the case in which no photons are detected; this occurs with probability $\lambda$. 
\end{theorem}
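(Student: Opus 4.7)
The plan is to realize $U$ as a minimal Stinespring-type dilation of the contraction $\sqrt{\lambda}A$. First I observe that $A$ is precisely $M^{-1}$, where $M$ is the $d\times d$ matrix whose $k$th column is $\ket{\psi_k}$: indeed, $M$ sends $\ketb{k}$ to $\ket{\psi_k}$, and $A$ inverts this since $\braket{\psi_i^{\perp}}{\psi_j} = \delta_{ij}$. A direct computation then gives $(I\otimes A)\ket{\Psi} = \ket{B_0}$. Since $B = M^*M$ (up to a conjugation that does not change the spectrum), the singular values of $M$ are $\sqrt{\mu_i}$ and the singular values of $A$ are $1/\sqrt{\mu_i}$. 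Rescaling, $\sqrt{\lambda}A$ has singular values $\sigma_i = \sqrt{\lambda/\mu_i}\in(0,1]$, with $\sigma_i=1$ precisely for the $d-s$ indices where $\mu_i=\lambda$.

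Next I would construct $U$ explicitly from the SVD $\sqrt{\lambda}A = V\Sigma W^*$, writing $\Sigma = I_{d-s}\oplus \Sigma_1$, where $\Sigma_1$ is the $s\times s$ diagonal block whose entries are strictly less than $1$. Define the $(d+s)\times(d+s)$ block matrix
\[
\hat\Sigma \;=\; \begin{pmatrix} I_{d-s} & 0 & 0 \\ 0 & \Sigma_1 & -\sqrt{I-\Sigma_1^2} \\ 0 & \sqrt{I-\Sigma_1^2} & \Sigma_1 \end{pmatrix},
\]
and set $U = (V\oplus I_s)\,\hat\Sigma\,(W^*\oplus I_s)$. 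Unitarity of $U$ follows from unitarity of $V$, $W$, and $\hat\Sigma$ (the middle block is a direct sum of $2\times 2$ rotations once $\Sigma_1$ is diagonalized). A short block computation shows the upper-left $d\times d$ block of $U$ is $V\Sigma W^* = \sqrt{\lambda}A$. Since any isometric dilation must supply at least one extra dimension for every singular value strictly less than $1$, the count of $s$ extra modes is both sufficient and minimal.

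Finally, applied to $\ket{\Psi}$ (whose photons originally live in the first $d$ modes of each qudit), the unitary $I\otimes U$ acts on each $\ket{\psi_k}$ so that its first $d$ output components equal $\sqrt{\lambda}A\ket{\psi_k} = \sqrt{\lambda}\ketb{k}$, with all remaining amplitude in the $s$ newly added modes. Summing over $k$ gives
\[
(I\otimes U)\ket{\Psi} \;=\; \tfrac{\sqrt{\lambda}}{\sqrt{d}}\sum_k \ketb{kk} + \ket{\text{junk}},
\]
where $\ket{\text{junk}}$ has the second qudit's photon exclusively in the extra modes. Performing PNRD on those $s$ modes and post-selecting on the vacuum outcome projects onto the first summand, whose squared norm is $\lambda\cdot d/d = \lambda$.

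The main technical obstacle is justifying that exactly $s$ (rather than the standard $d$) extra modes suffice for the dilation: this is the content of the block decomposition above and relies on the observation that singular values equal to $1$ already give unit-norm, mutually orthogonal columns, so no additional room is needed for them. Once that reduction is in place, the remaining steps are routine linear algebra, and the success probability $\lambda$ drops out as the squared norm of the Bell-like component.
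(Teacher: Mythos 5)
Your proof is correct and takes essentially the same route as the paper's: identify the eigenvalues of the Gram matrix $B$ with the inverse squared singular values of $A$, rescale so that $\sqrt{\lambda}A$ is a contraction whose singular values equal $1$ except in exactly $s$ places, and dilate it to a unitary on $d+s$ modes, so that post-selecting on vacuum in the new modes yields the Bell component of squared norm $\lambda$. The only difference is cosmetic: the paper first builds the isometry $\begin{bmatrix}\sqrt{\lambda}A\\ S\end{bmatrix}$ with $S$ read off from the SVD and then completes it to a unitary abstractly via Gram--Schmidt, whereas you complete it explicitly with a block cosine--sine rotation $U=(V\oplus I_s)\,\hat\Sigma\,(W^*\oplus I_s)$; both hinge on the same observation that only the $s$ singular values strictly below $1$ require extra room.
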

This is proven in Appendix~\ref{app:USD}. 
Viewing $A$ as a $d\times d$ matrix, we first extend an appropriate multiple of $A$ to a $(d+s)\times d$ isometry $U'$, then use Gram-Schmidt to obtain the unitary $U$. 
An example is given in Section~\ref{sec:derivation}. 

Note that if we let $M_k = U\ketbra{\psi_k}U^\dagger$, then the POVM given by $M_0, \dots, M_{d-1}, I - \sum_k M_k$ may be used for unambiguous discrimination of the states $\ket{\psi_k}$. In particular, obtaining outcome $M_k$ implies that the state was $\ket{\psi_k}$, while obtaining $I- \sum_k M_k$ is inconclusive. 
\subsubsection{Practical Extra-Dimensional Corrections}\label{sec:corrections_applied}
We first consider the context of teleportation, where \citet{Luo} initially applied a version of extra-dimensional corrections. 
In that setting, Alice has a state $\ket{\eta}$ (qudit $0$), Alice and Bob share a Bell pair $\ket{B_0}$ (qudits $1$ and $2$), and we perform the projection $\bra{\Psi}$ determined by \eqref{eq:uncorrected state} on qudits $0$ and $1$. 
The resulting state is 
\begin{equation}
    (\bra{\Psi}\otimes I_d)\ket{\eta}\ket{B_0}.
\end{equation}
Recall that we have a non-unitary matrix $A$ with $(I\otimes A)\ket{\Psi} = \ket{B_0}$. 
Noting that 
\begin{equation}
    (F\otimes I)\ket{B_0} = (I\otimes F^T)\ket{B_0}
\end{equation}
for any single-qudit linear transformation $F$, we may insert the matrix $A$ by applying $\overline{A} = (A^\dagger)^T$ to qudit $2$: 
\begin{align}
    &(I\otimes I \otimes \overline{A})(\bra{\Psi}\otimes I_d)\ket{\eta}\ket{B_0}
    \\=& (\bra{\Psi}\otimes I_d)(I\otimes I \otimes \overline{A})\ket{\eta}\ket{B_0}
    \\=& (\bra{\Psi}\otimes I_d)(I\otimes A^\dagger \otimes I)\ket{\eta}\ket{B_0}
    \\&= (\bra{B_0}\otimes I_d)\ket{\eta}\ket{B_0}.
\end{align}
The final expression is equivalent to the state $\ket{\eta}$, supported on qudit $2$ as in standard teleportation. 
Then \emph{after} the projection \eqref{eq:uncorrected state} is determined, one should use Theorem~\ref{thm:filip corrections} to extend $\overline{A}$ to a unitary and apply it to qudit $2$. The relevant unitary will simply be the complex conjugate of the matrix $U$ constructed above, and the probability $\lambda$ will be unchanged. Then the success probability corresponding to $\ket{\Psi}$ is $\lambda \times p$, where $p$ is the probability associated with the projector $\ket{\Psi}$. 
This is spelled out in detail in Section~\ref{app:general fusion}. 

This justifies the use of extra-dimensional corrections in teleportation, or more broadly any Type-II fusion in which one of the states to be fused is a maximally entangled pair. 
In a more general setting, in which both states involved in the fusion exhibit many-qudit entanglement, Theorem~\ref{thm:filip corrections} is not directly applicable. Instead, one may often apply the more general Theorem~\ref{thm:vidal} to correct the resulting state using local POVMs. 
We have not proven that the overall success probability remains the same using such methods, so the success probabilties of the present Section~\ref{sec:all_extra_dims} should be interpreted as upper bounds in the most general context. 
We further note that implementing extra-dimensional corrections will require increased circuit depth and active switching (or active reconfiguration of the circuit) depending on the obtained measurement outcome. This will likely increase the rate of photon loss in the circuit, further reducing the practical success probability. 

\subsection{Type-II fusion with W-state ancillae}\label{sec:luo}

We review the protocol of \cite{Luo}, put it into the context of extra-dimensional corrections, and finally present an extension that greatly improves the success probability. 

\subsubsection{Original protocol}\label{sec:luo original}

We now consider the teleportation protocol of \cite{Luo}, which in fact corresponds to a Type-II fusion protocol as discussed above. 
\citet{Luo} present the protocol in greatest detail for the qutrit case $d=3$, with a partial generalization to arbitrary dimensions presented. 
Let 
\begin{equation}
    \ket{W_d} = \frac{1}{\sqrt{d}}\sum_{k=0}^{d-1}\ketb{k}
\end{equation}
be the $d$-dimensional W-state, the higher-dimensional analogue of the $\ket{+}$ state. 

\begin{figure}[htb]
    \centering
    \includegraphics[width=.9\linewidth]{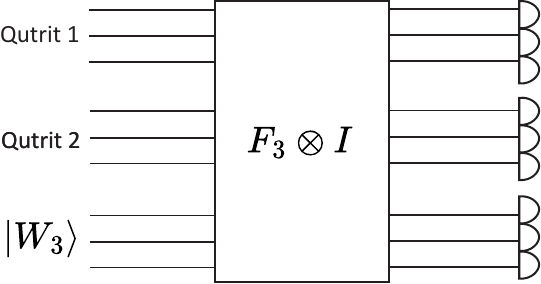}
    \caption{Type-II fusion for $d=3$, following \cite{Luo}. Each triplet in the circuit represents a port (qudit), and each port has three modes. The last three modes —the last port— belong to the ancilla state $\ket{W_3}$. For every $0\leq j\leq d-1$, we perform a three-dimensional Fourier transform involving the $j$th modes of each port. Concretely, we apply Fourier transforms to modes $(0,3,6), (1,4,7), (2,5,8)$. 
    }
    \label{fig:luo-fusion}
\end{figure}
In the case $d=3$, the protocol of \cite{Luo} begins with a Fourier projection as in Protocol~\ref{proto:fourier}, where two input qudits are inserted into ports $0$ and $1$, and port $2$ is occupied by an ancillary W-state, $\ket{W_3}$. 
(See Fig.~\ref{fig:luo-fusion}.) 
However, the effective projection on the input qudits is \emph{not} a Bell measurement, or even a projection onto a maximally entangled state. 
Instead, the projection is onto a state of the form \eqref{eq:uncorrected state}. 
Thus, as discussed in Section~\ref{sec:USD}, an extra-dimensional correction must be applied. 
In this $3$-dimensional case, all such corrections require only $1$ extra mode, and the appropriate unitaries were calculated to have the form 
\begin{equation}\label{eq:luo 3d correction}
    U = \frac{1}{2}\begin{pmatrix}
    -1 & 1 & 1 & 1\\ 1 & -1 & 1 & 1\\ 1 & 1 & -1 & 1\\ 1 & 1 & 1 & -1
\end{pmatrix},
\end{equation}
up to phases determined by the specific measurement pattern \cite{Luo}. 
(The derivation of this correction operator using the results of Section~\ref{sec:USD} is given in Section~\ref{sec:derivation}.) 
The resulting success probability is $1/9$. 
Although it is tempting to conjecture that the $d$-dimensional generalization will have probability $1/d^2$ (and this is stated as fact in the literature), we see below that this is not the case. 

For arbitrary $d$, \citet{Luo} present a generalization: perform Fourier projection using the $(d-2)$-qudit ancillary state $\ket{W_d}^{\otimes (d-2)}$, then perform Fourier projection, but only treating measurement patterns as successful if all photons are detected in the same port (and different time bins, as usual). 
For these $d$ patterns (one for each port), a single extra dimension still suffices to enable extra-dimensional projections, and the appropriate $(d+1)$-dimensional unitary is given in \cite{Luo} as a direct generalization of \eqref{eq:luo 3d correction}. For this protocol, the resulting success probability may be explicitly calculated as
\begin{equation}\label{eq:luo success prob}
    \left(\dfrac{(d-2)!}{d^{d-1}}\right)^2\sim \dfrac{2\pi}{d e^{2d}},
\end{equation}
where the asymptotic follows from Stirling's approximation. 
(Note this formula gives probability $1/81$ in the $d=3$ case, rather than $1/9$ as discussed above, since only $3$ of the $27$ possible patterns are considered.) 
We see that the success probabilities decay exponentially with the qudit dimension. 
\subsubsection{Derivation of Correction Matrix for $d=3$}\label{sec:derivation}
In this section, we derive the correction matrix \eqref{eq:luo 3d correction} \cite{Luo}, following the concepts and notation of Section~\ref{sec:USD}, especially Theorem~\ref{thm:filip corrections}. 
This section is simply meant as an example illustrating the method of extra-dimensional corrections and relating it to the work of \cite{Luo}. 
The state we wish to correct to a Bell state has the form \eqref{eq:uncorrected state}, 
where \[ \ket{\psi_0}= \frac{1}{\sqrt{2}}(\ket{1} + \ket{2}) = \frac{1}{\sqrt{2}} \begin{pmatrix} 0 \\ 1 \\ 1 \end{pmatrix}, \] 
\[\ket{\psi_1}= \frac{1}{\sqrt{2}}(\ket{0} + \ket{2}) = \frac{1}{\sqrt{2}}\begin{pmatrix} 1 \\ 0 \\ 1 \end{pmatrix}, \] 
\[\ket{\psi_2} = \frac{1}{\sqrt{2}}(\ket{0} + \ket{1}) = \frac{1}{\sqrt{2}}\begin{pmatrix} 1 \\ 1 \\ 0 \end{pmatrix}. \] 
We now consider the matrix $\phi$ whose columns are the $\ket{\psi_k}$:
\[\phi =\frac{1}{\sqrt{2}} \begin{pmatrix}
        0 &1& 1 \\
        1 &0 &0 \\
        1 &1& 0
    \end{pmatrix}
\]
Note that $\phi$ is the inverse of the matrix called $A$ in Section~\ref{sec:USD}. 
We invert $\phi$ to obtain
\[
A=\frac{1}{\sqrt{2}}\begin{pmatrix}
        -1&1&1\\
        1&-1&1\\
        1&1&-1
    \end{pmatrix}.
\]
The matrix $B = (AA^\dagger)^{-1}$ is:
\[
 B=\frac{1}{2} \begin{pmatrix}
        2&1&1\\
        1&2&1\\
        1&1&2
    \end{pmatrix}.
\]
The eigenvalues of $B$, with multiplicity, are $\{2, 1/2, 1/2\}$. The smallest eigenvalue is $\lambda=1/2$, giving the correction probability. 
The number of extra dimensions is $s=1$, since $B$ has only one eigenvalue that is not equal to $\lambda$. 
In particular, letting 
\[ M= \sqrt{\lambda} A = \frac{1}{2}\begin{pmatrix}
        -1&1&1\\
        1&-1&1\\
        1&1&-1
    \end{pmatrix},\]
we will embed $M$ into a unitary $U$ of dimension $4$, which we now construct. 
Note $M$ is normalized to have maximum singular value $1$, so that we can use Lemma~\ref{lemma:usd} to embed $M$ into an isometry $\begin{bmatrix}
        M\\ S
    \end{bmatrix}$. 
We calculate the singular vector of $M$ corresponding to its smallest singular value $1/2$ to be $(1/\sqrt{3})\begin{pmatrix} 1 & 1 & 1\end{pmatrix}^T$. 
Following the lemma, $S$ will be a $1\times 3$ matrix with the above singular vector and singular value $\sqrt{1-(1/2)^2} = \sqrt{3}/2$. 
We then have
\[S = \frac{\sqrt{3}}{2}\frac{1}{\sqrt{3}}\begin{pmatrix} 1 & 1 & 1\end{pmatrix} = 
\frac{1}{2}\begin{pmatrix} 1 & 1 & 1\end{pmatrix},\]
leading to the following isometry and corresponding unitary: 
\[
\begin{bmatrix}
        M\\ S
    \end{bmatrix} = \frac{1}{2}\begin{pmatrix}
    -1 & 1 & 1\\ 1 & -1 & 1\\ 1 & 1 & -1\\ 1 & 1 & 1
\end{pmatrix}, \,\,\, U = \frac{1}{2}\begin{pmatrix}
    -1 & 1 & 1 & 1\\ 1 & -1 & 1 & 1\\ 1 & 1 & -1 & 1\\ 1 & 1 & 1 & -1
\end{pmatrix}.
\]
The final column of $U$ follows from Gram-Schmidt.

\subsubsection{Improved W-state protocol using extra-dimensional corrections}\label{sec:scaling}
We now discuss an improved version of the protocol of Section~\ref{sec:luo original} \cite{Luo} and compare the protocols' performance. 
Note that in the $d$-dimensional case above ($d>3$), only $d$ measurement patterns were accepted in the Fourier projection, instead of the usual $d^d$ patterns. 
Patterns with photons in different ports still lead to a projection onto the state given in Lemma~\ref{lemma:fourier projection}, but not all terms have the same phase. 
When projecting onto the $W$-state ancillae, these differing phases lead to nontrivial destructive interference. 
Thus, unlike the cases considered in Section~\ref{sec:bharos extension}, not all measurement patterns may be treated equally and may have wildly differing associated probabilities. 
One may still perform Type-II fusion, correcting the resulting projections using extra-dimensional corrections, but many will need more than $1$ additional mode, and the correction unitaries vary greatly, not directly generalizing \eqref{eq:luo 3d correction}. 
We compare the success probabilities of these two cases in Table~\ref{table:luo}. The first column gives the success probability when following the method of \citet{Luo} for general $d$, numerically verifying the formula \eqref{eq:luo success prob}. (Note that the $d=3$ case does \emph{not} match the value of $1/9$ given in \cite{Luo}, because in that one case, the method of \citet{Luo} matches our own.) The second column gives our method for increasing the success probability using extra-dimensional corrections, and the final column gives the average number of extra modes required. 
We observe that all values are worse than those obtained using Protocols~\ref{proto:even} and \ref{proto:odd} above (see Table~\ref{table:odd} and Theorems~\ref{thm:even}, \ref{thm:odd}). 
As noted above, our Protocol~\ref{proto:odd} is over $22$ times more efficient than even the improved W-state protocol for $d=5$, and in fact the gap only widens as $d$ increases. Further, Protocol~\ref{proto:odd} \emph{does not} require extra-dimensional corrections, making it both more efficient and simpler to implement (although requiring the preparation of an entangled multi-qudit state). 

\begin{table}[ht]
\begin{tabular}{|c|c|c|c|}
\hline
$d$ & Success probability-1 & Success probability-2 & Average $s$ \\ \hline
3 & 0.012 & 0.111               & 1.0                                \\ \hline
4 & $9.8\times 10^{-4}$ & 0.017               & 2.229                              \\ \hline
5 & $9.2\times 10^{-5}$ & 0.003               & 3.685                              \\ \hline
\end{tabular}
\caption{For qudit dimensions $d=3,4,5$, we give the success probability of the protocol of \cite{Luo} and our generalization, where the ancillae are $d-2$ single-qudit $W$ states $\ket{W_d}$. In the column labeled ``Success probability-1," we give the total success probability when \emph{only} patterns with all photons in the same port (and different modes) are accepted. 
This is the method of \cite{Luo} for $d>3$. 
In the column labeled ``Success probability-2," we give the total success probability when one applies extra-dimensional corrections whenever possible. 
The final column records the average number $s$ of additional dimensions required for the extra-dimensional corrections in the latter case. 
The raw data may be found in the ancillary files on arXiv. 
\label{table:luo}}
\end{table}

\subsection{Extension of State Generation Circuits for Fusion}\label{sec:paesani}

\begin{figure}[htb]
    \centering
    \includegraphics[width=\linewidth]{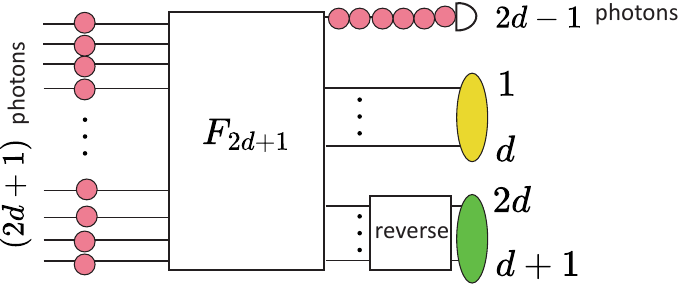}
    \caption{This Fourier transform-based scheme \cite{Paesani} enables the heralded generation of Bell states in arbitrary dimensions $d$. 
    The relationship between the output modes and the corresponding computational states of each qudit is illustrated. 
    The ``reverse" gate indicates that the order of the modes of the second qudit should be reversed to obtain a Bell state. 
    }
    \label{fig:paesani}
\end{figure}

\begin{figure}
    \includegraphics[width=.8\linewidth]{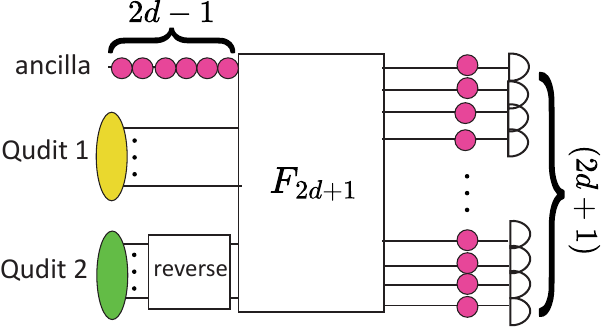}
    \caption{A Type-II fusion protocol suggested by reversing the Bell state generation circuit of \citet{Paesani}. 
    This requires an ancillary bunched state of $2d-1$ photons in a single mode as an ancilla, as well as two $d$-dimensional qudits as input. 
}
    \label{fig:paesani_all}
\end{figure}

\begin{table}[ht]
\begin{tabular}{|c|c|c|c|c|c|}
\hline
      & $r=1$ & $r=2$ & $r=3$ & $r=4$ & $r=5$ \\ \hline
$d=3$ & 0.116 & 0.116 & 0.109 & 0.140 & 0.136 \\ \hline
$d=4$ & 0.0   & 0.020 & 0.038 & 0.047 & 0.053 \\ \hline
$d=5$ & 0.0   & 0.0   & 0.004 & 0.011 & 0.018 \\ \hline
\end{tabular}
\caption{
Fusion success probabilities (including extra-dimensional corrections) for the ZTL circuit of Fig.~\ref{fig:paesani_all}, in qudit dimension $3\leq d\leq 5$ and using $1\leq r\leq 5$ ancillary photons. 
In this case, nearly all successful measurement patterns require extra-dimensional corrections involving $d-1$ additional modes.
The raw data may be found in the ancillary files on arXiv. }
\label{table:paesani}
\end{table}

In this section, we consider a potential Type-II fusion gate that does not use the Fourier projection of Section~\ref{sec:fourier}, although it still crucially uses the Fourier transform. 
Specifically, we consider the Bell state generation circuit introduced by \citet{Paesani} and depicted in Fig.~\ref{fig:paesani}. 
This circuit relies on the Zero Transmission Law (ZTL), which describes powerful (but not exhaustive) suppression laws for boson sampling using the Fourier transform \cite{tichy2010zero, saied2025general}. 
Thus it is referred to as the \emph{ZTL circuit}. 
In this circuit, $2d+1$ photons are input into different modes of a $F_{2d+1}$ interferometer, the $0$th mode is measured using PNRD, and one post-selects for finding $2d-1$ photons in that mode. 
The output on the remaining $2d$ modes is a $d$-dimensional Bell state $\ket{B_0}$, up to reversing the order of the modes of one of the qudits. 

As discussed in Section~\ref{sec:teleportation}, one can generally convert a Bell state generation circuit into a Type-II fusion. We do this in Fig.~\ref{fig:paesani_all}. 
To perform this variant of Type-II fusion, one inputs an ancillary bunched state of $2d-1$ photons in the $0$th mode, and the two input qudits occupy the remaining $2d$ modes. The modes of the second input qudit are reversed, then a $F_{2d+1}$ Fourier interferometer is applied. 
The natural analogue of the Bell state generation protocol would then post-select for the pattern $\ket{1, \dots, 1}$ with $1$ photon measured in each output mode, as depicted in Fig.~\ref{fig:paesani_all}. 
This does in fact give a Bell state projection, with relatively low success probability
\begin{equation}
    \dfrac{(2d-1)!}{d(2d+1)^{2d-1}},
\end{equation}
which follows directly from a result of \cite{Paesani}. 
To improve the success probability, we allow arbitrary post-selection patterns and the corresponding extra-dimensional corrections, weighting each appropriately. 
Further, given the flexibility of extra-dimensional corrections, one may consider different single-mode ancillary states $\ket{r}$ rather than only $\ket{2d-1}$. 
The corresponding success probabilities are given for $d=3,4,5$ and $1\leq r\leq 5$ in Table~\ref{table:paesani}, calculated numerically. 
This data shows many interesting patterns. 
First, note that a larger number of ancillary photons is not necessarily better: in dimension $d=3$, the case $r=4$ leads to a higher success probability than $r=5$. 
Further, the success probability is $0$ unless $r\geq d-2$; this aligns with the fusion protocols above, in which the ancillary states involve at least $d-2$ photons. 
However, for $r=d-2$ (and many larger values of $r$), we find that the success probability is always \emph{greater} than that of \cite{Luo} and our extension: compare with Table~\ref{table:luo}. 
The tradeoff is that, while the protocols of this section are more efficient than those of Section~\ref{sec:luo}, the extra-dimensional corrections are more complex, with $d-1$ additional modes generally required. 
Further, these protocols require preparation of bunched states $\ket{k}$, which may be challenging in practice. 
Of course, we must also recall that none of the protocols under discussion here have success probability as high as Protocols~\ref{proto:even} and \ref{proto:odd}, which require entangled ancillae but avoid the need for extra-dimensional corrections. 

In Appendix~\ref{app:paesani}, we consider Type-II fusion gates derived from other Bell state generation protocols of \cite{Paesani}, but all are numerically found to have very low success probability even with extra-dimensional corrections. 

\subsection{Novel boosted fusion for qubit case using extra-dimensional corrections}\label{sec:psiq-increased}

\begin{figure}[ht]
    \centering
    \includegraphics[width=0.5\linewidth]{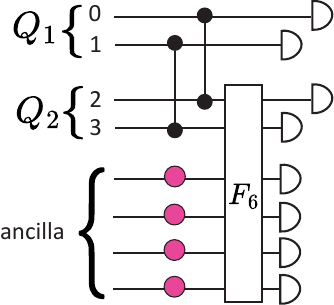}
    \caption{An extension of the circuit defined in~\cite{bartolucci2021creationentangledphotonicstates}. We add four modes, each containing a single photon, and perform a Fourier transform of the appropriate size. 
    As seen in Table~\ref{table:psiq}, adding more photons and modes in this way increases the overall success probability (if one uses extra-dimensional corrections). 
    }
    \label{fig:novel_boosting}
\end{figure}
We now return to the protocol of \cite{bartolucci2021creationentangledphotonicstates} for boosted Type-II fusion of \emph{qubits}, presented in Fig.~\ref{fig:boosted_type-II-2}. 
This protocol is notable because it does not generally project onto a stabilizer state, but rather requires (non-Clifford) unitary corrections. 
Thus it is a natural candidate for generalization with extra-dimensional corrections. 
(Also note that at most one extra mode will be required for these corrections, by Section~\ref{sec:USD}.) 
We consider a generalization of this protocol, still in the case $d=2$, but increasing the success probability with additional ancillary photons and extra-dimensional corrections. 
Specifically, for any $r\geq 1$, we replace each $F_3$ with an $F_{2+r}$, acting on two modes from the relevant qubit and $r$ ancillary modes, each occupied by a single photon. This generalization is depicted for $r=4$ in Fig.~\ref{fig:novel_boosting}, where we apply the boosting technique on only one side for simplicity of presentation. 
We present the success probabilities of the resulting Type-II fusion protocols, with and without extra-dimensional corrections, in Table~\ref{table:psiq}. 
As usual, these numbers may be increased by applying the same treatment to the other pair of modes. 
For small numbers of ancillary photons, these success probabilities are not particularly high: for example, recall that the boosting protocol of Fig.~\ref{fig:boosted_type-II} can obtain success probabilities of $0.625$ and $0.75$ with $2$ and $4$ single photons respectively \cite{Ewert_2014}. 
For the protocol presented here, however, the numerics seem to imply that \emph{the success probability strictly increases with the number of ancillary single photons}. 
(Note this is different from the statistics of Table~\ref{table:paesani}.) 
If this trend holds, this may lead to a useful method for high-probability Type-II fusion, since it does not require entangled ancillae. 

\begin{table*}[htb]
\begin{adjustbox}{width=.8\linewidth}
\begin{tabular}{|c|c|c|c|c|}
\hline
\# of extra photons & Success Probability without Corrections&Success probability with Corrections &   Average $s$  \\ \hline
1 (original case)& 0.583 & 0.583 & 0           \\ \hline
2 & 0.578&0.593               & 0.051          \\ \hline
3 & 0.575 &0.606               & 0.067         \\ \hline
4 & 0.586&0.620               & 0.063          \\ \hline
5 & 0.590&0.630               & 0.059          \\ \hline
\end{tabular}
\caption{Increasing the success probability for the qubit case ($d=2$) using additional single photons as in Fig.~\ref{fig:novel_boosting}. Only when using extra-dimensional corrections (Sec.~\ref{sec:USD}) does the success probability continue to increase as more ancillary photons are used. The raw data may be found in the ancillary files on arXiv. \label{table:psiq}}
\end{adjustbox}
\end{table*}

It is natural to attempt the same strategy in qudit dimension $d>2$. 
For example, we may take the circuit of Fig.~\ref{fig:luo-fusion} and, before measurement, apply a Fourier transform involving the modes of one output port and $r$ ancillary single photons. 
However, this seems to make the success probabilities strictly worse: see Table~\ref{table:psiq qutrit} for numerics in the $d=3$ case. 

\begin{table}[ht]
\begin{adjustbox}{width=1\linewidth}
\begin{tabular}{|c|c|c|c|}
\hline
\# of extra photons & Success probability with Corrections &   Average $s$ \\ \hline 
1 & 0.077 & 1.805 \\ \hline
2 & 0.071 & 1.881 \\ \hline
3 & 0.076 & 1.803 \\ \hline
\end{tabular}
\caption{In the qutrit case $d=3$, the analogue of the boosting protocol of Fig.~\ref{fig:novel_boosting} performs poorly. 
The use of ancillary single photons numerically seems to always decrease the success probability vs. the baseline probability $1/9$ obtained by \cite{Luo}.  The raw data may be found in the ancillary files on arXiv. 
\label{table:psiq qutrit}
}
\end{adjustbox}
\end{table}
\section{Conclusions and Outlook}\label{sec:conclusion}

\begin{table*}[htb]
\centering
\resizebox{\linewidth}{!}{ 
\begin{tabular}{|c|*{10}{c|}}
\hline
 & \multicolumn{3}{c|}{$d=3$} & \multicolumn{3}{c|}{$d=4$} & \multicolumn{3}{c|}{$d=5$} \\
\cline{2-10}
& Ancilla Type & EDC used & Success Probability & Ancilla Type & EDC used & Success Probability & Ancilla Type & EDC used & Success Probability \\
\hline
\citet{bharos2024efficienthighdimensionalentangledstate,bharoshigh} & N/A & N/A & N/A & Entangled $2$ photon & No & 0.125 & N/A & N/A & N/A \\
\hline
Sec. \ref{sec:bharos odd} & Entangled 2 photon& No & 0.16 & Entangled $2$ photon & No & 0.125 & Entangled 4 photon& No & 0.067 \\
\hline
\citet{Luo} & W state & No & 0.111 & 2 W states & No & $9.8\times 10^{-4}$  & 3 W states & No & $9.2 \times 10^{-5}$ \\
\hline
Section \ref{sec:scaling} & W state & Yes & 0.111 & 2 W states & Yes & 0.017 & 3 W states & Yes & 0.003 \\
\hline
Sec. \ref{sec:paesani} (few-photon version) & 1 Single photon & Yes & 0.111 & Bunched 2 photon & Yes & 0.020 & Bunched 3 photon & Yes & 0.004 \\
\hline
Sec. \ref{sec:paesani} (high-probability version) & Bunched 4 photon & Yes & 0.140 & Bunched 5 photon & Yes & 0.056 & Bunched 5 photon & Yes & 0.018 \\
\hline
\end{tabular}
}
\caption{We summarize many of the Type-II fusion protocols considered in the present work for $3\leq d\leq 5$. For each protocol and each dimension, we recall the type of ancillary state used, whether extra-dimensional corrections (EDC) are required, and the success probability. 
In the first three rows, the probabilities are known exactly; in the final three rows, the probabilities are calculated numerically. 
For the protocols of Sec.~\ref{sec:paesani}, we choose only two variants extracted from Table~\ref{table:paesani}. 
\label{table:summary conclusion}
}
\end{table*}
Type-II fusion, and the corresponding notion of teleportation, is a cornerstone concept with widespread applications across many domains of physics. 
In this work, we focus specifically on the role of Type-II fusion in enabling high-dimensional quantum computing. 
While several variants of high-dimensional linear-optical fusion have been proposed \cite{Luo, bharos2024efficienthighdimensionalentangledstate, bharoshigh}, they remain largely underexplored. 
We address this gap by analyzing, improving, and extending existing fusion methods, giving what we believe to be the first efficiently-scaling Type-II fusion protocol for odd-dimensional qudits. 
In the $5$-dimensional case, this method exhibits a $723$-fold improvement over previous work \cite{Luo}, and the gap only widens as the dimension $d$ increases. 

We first analyze the method of \cite{bharos2024efficienthighdimensionalentangledstate, bharoshigh}, which achieves Type-II fusion with success probability $2/d^2$ for even dimensions $d$. 
In Section~\ref{sec:bharos odd}, Protocol~\ref{proto:odd}, we adapt this method to obtain an odd-dimensional Type-II fusion gate with success probability $2/d(d+1)$, which we believe to be the most effective known protocol for Type-II fusion in odd dimensions. Furthermore, we propose a physical implementation pathway for the required ancillary state. This state may be constructed from a spin qudit in silicon, coupled to a microwave cavity using a time-bin multiplexing scheme. This is an alternative version of the method in~\cite{Ustun2025}. 

The method we present in Protocol~\ref{proto:odd} works by embedding qudits of odd dimension $d$ into a larger even-dimensional space. 
This is a natural technique in linear optics, where photons can naturally spread across many modes, and is similar in spirit to other methods of ``boosted" fusion that use ancillary modes and photons. 
This flexibility of qudit dimension in linear optics naturally leads to the notion of \emph{extra-dimensional corrections} presented in Section~\ref{sec:USD}, which can allow circuits that project onto many \emph{non}-maximally-entangled states to be non-deterministically corrected into Bell measurements. 
This technique works by implementing a unitary on a larger space involving a target qudit and newly added vacuum modes, then measuring the new mode and post-selecting for the case in which no photons are detected there. This is a concrete linear optical special case of the work of \cite{vidal1999entanglement} and generalizes a technique already used to obtain a Type-II fusion gate in \cite{Luo}. 

We then consider various protocols utilizing extra-dimensional corrections. 
We investigate and extend the method of \cite{Luo}, improving its success probability. 
Despite claims in the literature, we note in Section~\ref{sec:luo} that both the original method of \cite{Luo} and our extension have success probability significantly worse than $1/d^2$ for general $d$. 
The success probability is given analytically for the original protocol and shown numerically for our extension. 
We also consider Type-II fusion gates derived from the high-dimensional Bell state generation circuits of \cite{Paesani}. 
(See Sec.~\ref{sec:paesani} and App.~\ref{app:paesani}.) 
Utilizing extra-dimensional corrections, these can outperform our extension of \cite{Luo}, but still have worse success probability than Refs. \cite{bharos2024efficienthighdimensionalentangledstate, bharoshigh} (even dimensions) and our Protocol~\ref{proto:odd} (odd dimensions). 

We also consider an application of extra-dimensional corrections to extend a boosted fusion circuit of \cite{bartolucci2021creationentangledphotonicstates} in the \emph{qubit} case. 
We find a circuit that performs worse than existing boosting protocols such as \cite{Grice, Ewert_2014} for small numbers of ancillary photons, but numerically seems to monotonically increase in success probability as more single photons are added. 
This circuit may be useful for practical implementations of Type-II fusion (for qubits) with very high success probability, since it requires no entangled ancillae. 

In Appendix~\ref{sec:boosted_high_dim}, we also evaluate a boosted fusion protocol directly generalizing the method of \cite{Ewert_2014} and relate it to the zero-transmission law for the Fourier transform \cite{tichy2010zero, saied2025general}. 
This method is primarily of theoretical interest, as the improvement over the protocols of Section~\ref{sec:luo} is very small and requires a large number of ancillary photons. 
(Further, the success probability is still worse than that of Protocols~\ref{proto:even} and \ref{proto:odd}.)

To summarize the various Type-II fusion protocols considered here, we note the following. 
(Also see Tables~\ref{table:summary} and \ref{table:summary conclusion}.) 
The work of \cite{bharos2024efficienthighdimensionalentangledstate, bharoshigh} and our extension (Protocols~\ref{proto:even} and \ref{proto:odd} respectively) obtain the highest known success probabilities in all dimensions and \emph{do not} require extra-dimensional corrections, but \emph{do} require a highly entangled $(d-2)$-qudit ancillary state (equivalent to a $d/2$-dimensional GHZ state) as input. 
(Recall the construction of these states discussed in Section~\ref{sec:ancilla}.) 
The other protocols we consider generally require extra-dimensional corrections, whose implementation may be quite complicated in practice (see Section~\ref{sec:corrections_applied}), and have lower success probability, but also require only single photons or bunched states as ancillae. 
Among this family, our extensions of the work of \cite{Paesani} given in Section~\ref{sec:paesani} seem to have the highest success probability. 
At present, this is only known numerically, but this protocol can likely be better understood via the theory of the Zero Transmission Law and related suppression laws for the discrete Fourier transform \cite{tichy2010zero, saied2025general}. 

We also recall that the protocols of \cite{Luo, bharos2024efficienthighdimensionalentangledstate, bharoshigh} and our generalizations in Sections~\ref{sec:bharos odd} and \ref{sec:scaling}, are simply variants of the Fourier projection (see Section~\ref{sec:fourier}). This operation may be viewed as a projection onto a symmetric entangled state involving $d$ qudits of dimension $d$; this state is \emph{not} a stabilizer state for $d>2$. 
The protocols we consider occupy $d-2$ of the input qudits with ancillae, turning the Fourier projection into a two-qudit measurement. 
It is possible that one can exploit the Fourier projection as a \emph{many-body} entangling measurement in a nontrivial manner, allowing for the creation of larger-scale entanglement with fewer steps and higher success probability than obtainable by iterating fusions of the type considered here. 
A variant of this is utilized in the boosted fusion circuit of Appendix~\ref{sec:boosted_high_dim}, which can be viewed as projecting onto a $d$-qudit GHZ state. 
We leave this direction for future work. 

Another avenue for future work relates to the possibility for high-dimensional fault-tolerant quantum computing. 
Recall the example of the $[[5,1,3]]_{\mathbb{Z}_d}$ modular-qudit code \cite{chau1997five}, which uses $5$ physical ($d$-dimensional) qudits to encode one $d$-dimensional logical qudit and can tolerate $2$ erasures. 
Similarly, the qudit surface code \cite{bullock2007qudit} is $[[n^2, 1, n]]_{\mathbb{Z}_d}$, using $n^2$ physical ($d$-dimensional) qudits to encode one $d$-dimensional logical qudit, with the ability to tolerate any $d-1$ erasures. 
These examples imply that qudits may be able to store more information with the same robustness to error using the same number of particles. 
Thus linear optical qudits may potentially be useful for high-dimensional fault-tolerant quantum computation, such as analogues of FBQC, which have not yet been developed in the qudit case. 
If developed, an important question will be whether the fusion success probabilities we obtain here are sufficient for high-dimensional FBQC, and for which values of $d$. 
Although the $d>2$ case has lower success probability than the qubit case, 
it is possible that more efficient error correcting codes will make up for this difference for small $d$, or perhaps more efficient ``boosted'' fusions can be developed to make these cases feasible. 

Linear optical quantum computing is a promising and potentially highly scalable platform. Through this work, we aim to advance the understanding and practical implementation of high-dimensional fusion, an essential tool in measurement-based approaches to high-dimensional linear optical quantum computation. While this is by no means the final word, our results lay a strong foundation for future research and open new pathways for further improvements in success probability and resource efficiency.

\section*{Acknowledgments}
 We thank Filip Maciejewski and Namit Anand for fruitful discussions and useful comments on the theory of extra-dimensional corrections.
G.\"U. acknowledges supports from Sydney Quantum Academy (SQA) where she is a primary scholarship holder. This research was developed with funding from the Defense Advanced Research Projects Agency [under the Quantum Benchmarking(QB) program under award no. HR00112230007 and HR001121S0026 contracts]. 
We are grateful for support from  DARPA under IAA 8839, Annex 130, and from  NASA Ames Research Center. 
The United States Government retains, and by accepting the article for publication, the publisher acknowledges that the United States Government retains, a nonexclusive, paid-up, irrevocable, worldwide license to publish or reproduce the published form of this work, or allow others to do so, for United States Government purposes.
\newpage
\section*{Appendix}
\setcounter{section}{0}
\setcounter{equation}{0}
\setcounter{figure}{0}
\renewcommand{\thesection}{A-\Roman{section}}
\renewcommand{\theequation}{A.\arabic{equation}}
\renewcommand{\thefigure}{A.\arabic{figure}}
\section{Calculating fusion success probabilities}\label{app:general fusion}
We now consider a more general perspective of Type-II fusion between a pair of $d$-dimensional qudits. A Type-II fusion protocol involves: 
\begin{enumerate}
    \item Two states $\ket{\chi_0}, \ket{\chi_1}$, each sending one $d$-mode qudit as input to the fusion. For the fusion to be useful, the $\ket{\chi_i}$ are generally multi-qudit entangled states, with the other qudits not involved in the fusion. 
    \item A (possibly vacuum) ancillary state $\ket{\textnormal{anc}}$ on $N$ modes. Typically the ancilla will be a state of $d-2$ $d$-dimensional qudits, but this is not required. 
    \item A linear optical unitary $\widetilde{U}$ (often a variant of the Fourier transform), operating on the $N+2d$ \emph{active modes} (the two input qudits and the $N$-mode ancilla). 
    \item Photon-number-resolving detection on the $N+2d$ active modes, performed after the unitary. 
    \item A list of \emph{good} measurement patterns. If a measurement pattern is not good, the fusion is considered a failure. 
    \item Correction POVMs, adaptively chosen based on the obtained (good) measurement pattern. In the general picture, variants of the correction POVMs are performed on the qudits of (say) $\ket{\chi_1}$ that are \emph{not} directly involved in the fusion. Note that in many cases, such as \cite{Terry_2005, bharos2024efficienthighdimensionalentangledstate}, the corrections are unitaries used simply to eliminate unwanted phase factors. 
    In other cases, such as \cite{Luo} and Section~\ref{sec:all_extra_dims}, not all outcomes of the correction POVM may lead to success. See the discussion in Section~\ref{sec:USD}. 
\end{enumerate}
The outcome of such a protocol is an effective Bell measurement on the input qudits of the state $\ket{\chi_0}\ket{\chi_1}$. The input qudits are destroyed as part of the measurement, but entanglement is created between the other qudits of the two states. 

To calculate the success probability of a Type-II fusion protocol, we take each good pattern $\ket{p}$ and calculate
\[
\sqrt{w_p}\ket{p'} = \bra{Q_0} \bra{Q_1}\bra{\textnormal{anc}}\widetilde{U}^\dagger \ket{p},
\]
where $\bra{Q_0}$ is the projector onto states with exactly $1$ photon in the first $d$ modes, and similar for $\bra{Q_1}$. 
Then $\ket{p'}$ is a normalized two-qudit state, and $w_p\leq 1$ is the appropriate weight (possibly $0$, in which case $\ket{p'}$ is arbitrary and we stop the computation). 
We express $\ket{p'}$ in the form \eqref{eq:uncorrected state} and let $\lambda_p$ be the resulting correction factor from Theorem~\ref{thm:filip corrections}, with $\bra{P_p}$ the projection onto the corrected subspace. 
Then by the theorem, we obtain
\begin{equation}
    \sqrt{w_p\lambda_p} \bra{P_p} \ket{p'} 
    = \sqrt{w_p\lambda_p}\ket{B_0},
\end{equation}
where $\ket{B_0}$ is the ideal Bell state. 
In other words, with probability $w_p\lambda_p$, the (corrected) fusion procedure with measurement pattern $\ket{p}$ projects onto $\ket{B_0}$. 
Then the success probability for arbitrary $\ket{\chi_0}, \ket{\chi_1}$ is obtained by summing the overlap over all good patterns $\ket{p}$: 
\begin{equation}
    \sum_p w_p\lambda_p \left|\bra{B_0}(\ket{\chi_0}\ket{\chi_1})\right|^2.
\end{equation}
To express this as a success probability for \emph{random} input states, as is typical, we replace all $\left|\bra{B_0}(\ket{\chi_0}\ket{\chi_1})\right|^2$ with $1/d^2$, with $d^2$ being the dimension of the two-qudit subspace. 
In other words, we replace $\ket{\chi_0}\ket{\chi_1}$ with a maximally mixed two-qudit state $\frac{1}{d^2}I$, which satisfies $\Tr\left(\frac{1}{d^2}I\ketbra{B_0}\right) = \frac{1}{d^2}$. 
Then the success probability for random input states is
\begin{equation}
    \frac{1}{d^2}\sum_p w_p\lambda_p.
\end{equation}
This assumption, that the success probability is for \emph{random} input states, is rarely stated but important. For example, the classic Type-II fusion circuit \cite{Terry_2005} (see Fig.~\ref{fig:fusion-types}) is generally stated to have success probability $1/2$, but in fact it will always succeed for certain Bell pair input states and always fail for others. 
(Compare with \cite{Ewert_2014}, where this observation is explicitly used in designing boosted fusion circuits.) 

\section{Hamiltonian of the system for constructing the ancilla}\label{app:Hamiltonian}
The details of the system's Hamiltonian are the same as in~\cite{Ustun2025} and can also be found there.
For time-bin multiplexing, we use a single microwave cavity fabricated on our silicon chip, which is coupled to an antimony donor. This cavity is designed to operate at the EDSR frequency, specifically between the states  $\ket{\downarrow}\ket{7/2} \leftrightarrow \ket{\uparrow}\ket{5/2}$. The total Hamiltonian of the system is given by:
\begin{equation}
    H_{\text{total}} = H_{Sb} + H_{\text{interaction}} + H_{\text{field}}
\end{equation}
where $H_{Sb}$ is the Hamiltonian of the donor which is defined in Eq~1 in main text. $ H_{\text{interaction}} + H_{\text{field}}$ part represents the emitting photon through cavity. The total Hamiltonian can be expressed as follows:
\begin{widetext}
\begin{equation}\label{drift_Ham}
    \begin{aligned}
        H_{\text{total}} =  \underbrace{ B_0(-\gamma_n \hat{I_z} + \gamma_e \hat{S_z}) + A(\Vec{S}\cdot\Vec{I}) + \Sigma_{\alpha, \beta \in \{x,y,z\}} Q_{\alpha \beta}\hat{I}_\alpha \hat{I}_\beta}_{H_{Sb}} \\
        + \left (\underbrace{ \hbar w_{\ket{7/2}\leftrightarrow\ket{5/2}} a^{\dag} a}_{H_{\text{field}}}       
        +   \underbrace{g (\ket{7/2}\ket{\downarrow}\bra{5/2}\bra{\uparrow} a^{\dag}) + g(\ket{5/2}\ket{\uparrow}\bra{7/2}\bra{\downarrow} a)}_{H_{\text{interaction}}} \right)\\       
    \end{aligned}
\end{equation}
\end{widetext}where the operators $a^{\dag}$ and $a$ represent creation and annihilation operators, respectively. $g$ represents the coupling strength between the spin and the cavity. When purely magnetic coupling is employed via the ESR transitions, $g$ is in the range of $10$Hz - $100$kHz~\cite{Vine2023latched}. However, when utilizing an electric dipole for spin-cavity coupling, significantly higher coupling strengths -- in the range of a few MHz\cite{tosi_2017, Tossi_2014} -- and thus faster photon emission times are possible. This is why we propose to operate the cavity at the antimony donor EDSR transitions.

The first term in $H_{Sb}$ accounts for the Zeeman splitting, on both the electron and the nuclei. The second term describes the hyperfine interaction that arises from the overlap of electron and nuclear wavefunctions. In the last term, where $\alpha, \beta = {x, y, z}$ represent Cartesian axes, $\hat{I}_\alpha$ and $\hat{I}_\beta$ are the corresponding 8-dimensional nuclear spin projection operators. The term $Q_{\alpha\beta} = e q_n V_{\alpha\beta}/2I(2I-1)h$ represents the nuclear quadrupole interaction energy, determined by the electric field gradient (EFG) tensor $V_{\alpha\beta} = \partial^2V (x, y, z)/\partial\alpha\partial\beta$ \cite{asaad2020coherent}. This quadrupole interaction introduces an orientation-dependent energy shift to the nuclear Zeeman levels, enabling the individual addressability of nuclear states. $B_0$ represents the magnetic field in which the nanoelectronic device containing the antimony donor is placed, with a value approximately equal to $1 \text{T}$.
This ensures that the eigenstates of $H_{\text{Sb}}$ are approximately the tensor products of the nuclear states $\ket{m_I}$ with the eigenstates $ \{\ket{\downarrow} , \ket{\uparrow} \} $ of $\hat{S}_z$ because $\gamma_e B_0 \gg A \gg Q_{\alpha\beta}$. The latter condition implies $H_{\text{Sb}} \approx B_0(-\gamma_n \hat{I_z} + \gamma_e \hat{S_z}) + A(\Vec{S} \cdot \Vec{I})$ ensuring that the nuclear spin operator approximately commutes with the electron-nuclear interaction.
This condition allows for an approximate quantum non-demolition (QND) readout of the nuclear spin via the electron spin ancilla \cite{joecker2024error}.

\section{Proof of Theorem~\ref{thm:filip corrections}}\label{app:USD}
We recall the setting of Section~\ref{sec:USD}. We will prove Theorem~\ref{thm:filip corrections}. 
First we need the following lemma: 
\begin{lemma}\label{lemma:usd}
    Let $M$ be a $d\times d$ matrix with maximum singular value equal to $1$; let $s$ be the number of singular values less than $1$. 
    Then $M$ can be extended to a $(d+s)\times d$ isometry $\overline{M} = \begin{bmatrix}
        M\\ S
    \end{bmatrix}$. 
    Further, any isometry extending $M$ in this way must have at least $d+s$ rows. 
\end{lemma}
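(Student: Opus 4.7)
The plan is to reduce the isometry condition for $\overline{M}$ to a rank statement via the singular value decomposition of $M$. First I would write $M = U \Sigma V^\dagger$ with $\Sigma = \mathrm{diag}(\sigma_1, \dots, \sigma_d)$; by hypothesis $\sigma_i = 1$ for exactly $d - s$ indices and $\sigma_i < 1$ for the remaining $s$ indices.

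Next I would observe that an appended block $S$ of $k$ rows yields an isometry if and only if $\overline{M}^\dagger \overline{M} = M^\dagger M + S^\dagger S = I$, equivalently $S^\dagger S = I - M^\dagger M = V (I - \Sigma^2) V^\dagger$. The right-hand side is positive semidefinite of rank exactly $s$, since precisely $s$ of the diagonal entries $1 - \sigma_i^2$ are nonzero.

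For the lower bound, I would use $\mathrm{rank}(S^\dagger S) \leq \mathrm{rank}(S) \leq k$; since the rank must equal $s$, we need $k \geq s$, so $\overline{M}$ has at least $d + s$ rows. For existence, I would take $k = s$ and write $S$ down explicitly: let $V_s$ be the $d \times s$ submatrix of $V$ whose columns correspond to the singular values $\sigma_i < 1$, and let $D$ be the $s \times s$ diagonal matrix with entries $\sqrt{1 - \sigma_i^2}$ over the same indices. Then $S := D V_s^\dagger$ satisfies $S^\dagger S = V_s D^2 V_s^\dagger = V(I - \Sigma^2)V^\dagger$, since columns of $V$ with $\sigma_i = 1$ contribute zero to this sum; thus $\overline{M} = \begin{bmatrix} M \\ S \end{bmatrix}$ is an isometry.

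There is no substantive obstacle here: the content of the lemma is simply that to restore unitarity one needs exactly one extra row per direction in which $M$ is singular-value-deficient, and both the lower bound and the explicit construction fall straight out of the SVD. The only mild care required is the bookkeeping distinction between singular values equal to $1$ (which contribute nothing to $S$) and those strictly less than $1$ (which each require their own row).
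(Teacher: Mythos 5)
Your proposal is correct and follows essentially the same route as the paper's proof: both use the SVD to reduce the isometry condition to $S^\dagger S = I - M^\dagger M$, construct $S$ as (the square roots of the deficient eigenvalues) times (the corresponding singular vectors' conjugate transposes), and obtain the lower bound by comparing ranks in that equation. Your write-up merely spells out the rank inequality $\mathrm{rank}(S^\dagger S)\leq \mathrm{rank}(S)\leq k$ a bit more explicitly than the paper does.
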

\begin{proof}
    For $\overline{M}$ to be an isometry, we must have
    \begin{equation}
        I = \overline{M}^\dagger \overline{M} = M^\dagger M + S^\dagger S.
    \end{equation}
    We will construct $S$ with 
    \begin{equation}\label{eq:lemma s condition}
        S^\dagger S = I - M^\dagger M.
    \end{equation}
    Write the singular value decomposition of $M$ as $M=WDV$, where $D = \textnormal{diag}(m_1, \dots, m_d)$ and we assume the $m_i$ are in increasing order. 
    Let $\ket{v_i}$ be the singular vector corresponding to $m_i$. 
    Let $D' = \textnormal{diag}(\sqrt{1-m_1^2}, \dots, \sqrt{1-m_s^2})$, an $s\times s$ diagonal matrix, and let $V'$ be the $d\times s$ matrix whose columns are the first $s$ singular vectors of $M$. 
    Then $S = D'(V')^\dagger$ satisfies
    \begin{align}
        S^\dagger S &= V' (D')^\dagger D' (V')^\dagger
        \\&= V' \textnormal{diag}(1-m_1^2, \dots, 1-m_s^2)(V')^\dagger
        \\&= \sum_{i=1}^s (1-m_i^2)\ket{v_i}\bra{v_i}
        \\&= \sum_{i=1}^d (1-m_i^2)\ket{v_i}\bra{v_i}
        \\&= I - \sum_{i=1}^d m_i^2\ket{v_i}\bra{v_i}
        \\&= I-M^\dagger M.
    \end{align}
Then $S$ suffices to extend $M$. 
Further, comparing the ranks in \eqref{eq:lemma s condition} shows that $s$ is minimal. 
\end{proof}

We now show how to construct the required extra-dimensional correction matrices, extending the matrix $A$ of \eqref{eq:matrixA}. 
We restate Theorem~\ref{thm:filip corrections} for convenience: 
\begin{theorem}
    Let $B = (\braket{\psi_j}{\psi_i})_{ij}$ be the Gram matrix, with smallest eigenvalue $\lambda$. 
    Let $s\leq d-1$ be the number of eigenvalues of $B$ that are \emph{not} equal to $\lambda$. 
    If we augment the second qudit with $s$ extra dimensions (modes), there exists a unitary $U$ extending the matrix $\sqrt{\lambda} A$, with 
    \begin{equation}
        (I\otimes U)\ket{\Psi} = \frac{\sqrt{\lambda}}{\sqrt{d}}\sum_k \ketb{kk} + \ket{\textnormal{junk}}.
    \end{equation}
    Here, in each term of the ``junk" state, the second qudit's photon is always in one of the newly added modes. 
    Performing PNRD on the $s$ new modes, we post-select for the case in which no photons are detected; this occurs with probability $\lambda$. 
\end{theorem}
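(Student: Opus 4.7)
The plan is to construct the unitary $U$ by extending an appropriately rescaled version of the matrix $A$ from \eqref{eq:matrixA} to an isometry via Lemma~\ref{lemma:usd}, then completing to a unitary by Gram-Schmidt. Let $\Phi$ denote the $d\times d$ matrix whose columns are $\ket{\psi_0},\ldots,\ket{\psi_{d-1}}$. The dual-basis relation $\braket{\psi_i^\perp}{\psi_j} = \delta_{ij}$ identifies the operator $A = \sum_k \ketb{k}\bra{\psi_k^\perp}$ as $\Phi^{-1}$, so $AA^\dagger = (\Phi^\dagger\Phi)^{-1} = B^{-1}$. Consequently the singular values of $A$ are $1/\sqrt{\mu}$ as $\mu$ ranges over the eigenvalues of $B$, so the largest singular value of $A$ equals $1/\sqrt{\lambda}$. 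It follows that $M := \sqrt{\lambda}\,A$ has maximum singular value exactly $1$, with exactly $s$ singular values strictly less than $1$ (one for each eigenvalue of $B$ distinct from the minimum $\lambda$). Note this also gives $s\leq d-1$, since $\lambda$ is attained by at least one eigenvalue.

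Next I would apply Lemma~\ref{lemma:usd} to $M$ to obtain a $(d+s)\times d$ isometry $\bar M = \bigl[\begin{smallmatrix} M\\ S\end{smallmatrix}\bigr]$, then use Gram-Schmidt to complete its $d$ orthonormal columns to an orthonormal basis of $\mathbb{C}^{d+s}$; stacking these as columns yields the required unitary $U$, whose first $d$ columns coincide with $\bar M$. Identifying the extra $s$ computational basis vectors of $U$'s domain with the newly added modes of the second qudit, and noting that each $\ket{\psi_k}$ is supported on the original $d$ modes, we compute
\begin{equation*}
U\ket{\psi_k} \;=\; \bar M\ket{\psi_k} \;=\; \sqrt{\lambda}\,A\ket{\psi_k} \,+\, S\ket{\psi_k} \;=\; \sqrt{\lambda}\,\ketb{k} + \ket{\eta_k},
\end{equation*}
where $\ket{\eta_k} := S\ket{\psi_k}$ lives entirely in the span of the $s$ extra modes. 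Substituting into the definition of $\ket{\Psi}$ in \eqref{eq:uncorrected state} gives
\begin{equation*}
(I\otimes U)\ket{\Psi} \;=\; \frac{\sqrt{\lambda}}{\sqrt{d}}\sum_{k=0}^{d-1} \ketb{kk} \,+\, \frac{1}{\sqrt{d}}\sum_{k=0}^{d-1} \ketb{k}\otimes\ket{\eta_k},
\end{equation*}
which is the claimed decomposition, with the second summand the ``junk'' state whose second-qudit photon always sits in an extra mode. The probability of measuring vacuum on the $s$ new modes equals the squared norm of the first summand, which is $\lambda$ since the $\ketb{kk}$ are orthonormal.

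I expect the main obstacle to be organizational rather than conceptual: one must keep careful track of which tensor factor the unitary acts on, consistently identify the ``extra dimension'' subspace of the augmented second qudit with the extra rows of $\bar M$, and verify that $U\ket{\psi_k} = \bar M\ket{\psi_k}$ (which relies on $\ket{\psi_k}$ living in the original $d$-dimensional subspace, so that the Gram-Schmidt completion columns do not contribute). The substantive mathematical input is the identity $AA^\dagger = B^{-1}$, which simultaneously pins down the correct rescaling $\sqrt{\lambda}$ and the correct number $s$ of additional modes; everything else reduces to Lemma~\ref{lemma:usd} plus direct calculation.
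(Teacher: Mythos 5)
Your proposal is correct and follows essentially the same route as the paper's own proof: relate the singular values of $A$ to the eigenvalues of the Gram matrix (your $AA^\dagger = B^{-1}$ is the paper's $B = (A^{-1})^\dagger A^{-1}$), rescale to $M=\sqrt{\lambda}A$ with maximum singular value $1$, extend to an isometry via Lemma~\ref{lemma:usd}, complete to a unitary by Gram-Schmidt, and apply it to $\ket{\Psi}$ to read off the Bell part with weight $\lambda$. Your write-up is in fact slightly more explicit than the paper's on two points the paper leaves implicit — that the count of singular values of $M$ below $1$ equals the theorem's $s$, and that the vacuum-detection probability is the squared norm $\frac{\lambda}{d}\cdot d = \lambda$ of the Bell component — but these are elaborations, not a different argument.
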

\begin{proof}
We have 
\begin{equation}\label{eq:AB relationship}
    B = (A^{-1})^\dagger A^{-1}.
\end{equation}
Let $\lambda$ be the minimum eigenvalue of $B$, necessarily positive since $A$ is invertible. 
By \eqref{eq:AB relationship}, $\sqrt{\lambda}$ is the minimum singular value of $A^{-1}$, and therefore $1/\sqrt{\lambda}$ is the maximum singular value of $A$. 
Let $M=\sqrt{\lambda}A$, which has maximum singular value $1$. 
By Lemma~\ref{lemma:usd}, $M$ can be extended to a $(d+s)\times d$ isometry $\begin{bmatrix}
        M\\ S
    \end{bmatrix}$. 
This extends to a $(d+s)\times (d+s)$ unitary $U$ using Gram-Schmidt. 
The unitary $U$ satisfies
\begin{align}
    U\ket{\psi_k} &= M\ket{\psi_k} + S\ket{\psi_k} 
    \\&= \sqrt{\lambda}\ket{k} + \textnormal{junk},
\end{align}
where the ``junk" term has its photon in the newly added dimensions. 
In particular, 
\begin{equation}
    (I\otimes U)\frac{1}{\sqrt{d}}\sum_k \ket{k}\ket{\psi_k} =  \frac{\sqrt{\lambda}}{\sqrt{d}}\sum_k \ket{kk} + \textnormal{junk}.
\end{equation}
Performing PNRD on the $s$ extra dimensions of the second qudit, the probability of measuring no photons (so that we project onto a valid two-qudit state) is $\lambda$. 
\end{proof}
Note that Lemma~\ref{lemma:usd} is constructive, using the singular value decomposition. Then this proof gives a constructive algorithm for the extra-dimensional unitary correction $U$. 

\section{Extension of State Generation Circuits for
Fusion - 2}\label{app:paesani}
\begin{table}[htb]
\centering
\begin{adjustbox}{width=1\linewidth}
\begin{tabular}{|c|c|c|c|}
\hline
 & \multicolumn{3}{c|}{Success probability with corrections ($d=3$)}  \\ \hline
Circuit 1 & $U_A$: 0.0185 \hspace{0.3cm} & $U_B$:0.0185 \hspace{0.3cm} & $U_C$:0.0185  \\ \hline
Circuit 2 & \multicolumn{3}{c|}{0.0078}  \\ \hline
\end{tabular}
\end{adjustbox}
\caption{Fusion Success Probabilities with circuits inspired by~\cite{Paesani} using the theorem in Sec:~\ref{sec:USD}. The raw data may be found in the ancillary files on arXiv. }
\label{table:paesani-app}
\end{table}
\begin{figure*}
    \centering
    \includegraphics[width=.8\linewidth]{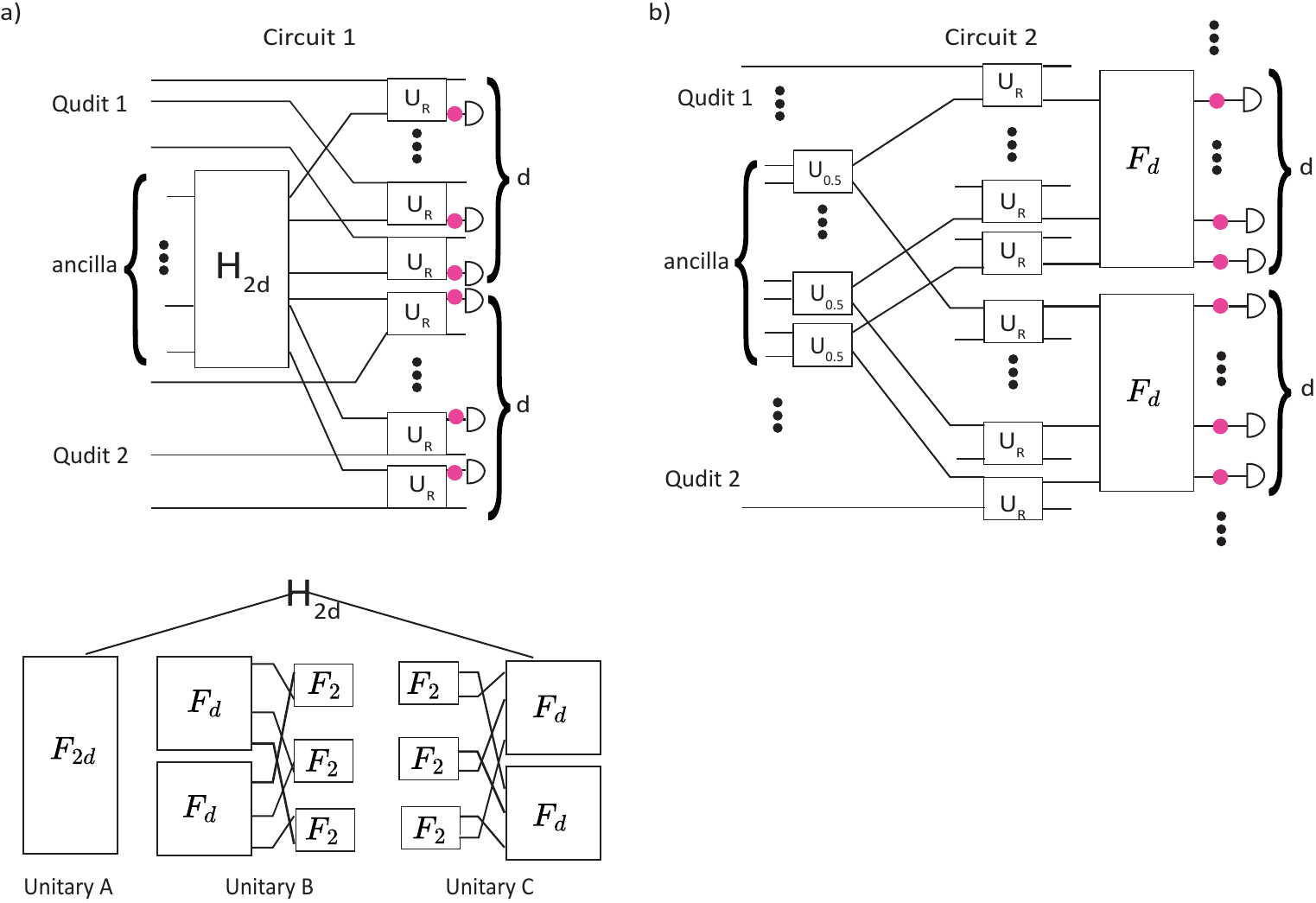} 
    \caption{We present several $d$-dimensional fusion protocols by reversing the high-dimensional Bell state generation circuits of \citet{Paesani}. a) This circuit requires a $2d$-mode ancilla. The unitary operation $H_{2d}$ can be implemented in three different ways: i) as a single $2d$-dimensional discrete Fourier transform (DFT), ii) as two $d$-dimensional DFTs followed by $d$ two-dimensional DFTs (equivalent to Hadamard operations), iii) or as $d$ two-dimensional DFTs (Hadamard operations) followed by two $d$-dimensional DFTs.
    b) Circuit~2 requires three distinct operations: $U_{0.5}=H$ (a 50:50 beamsplitter), $U_R$, and two $d$-dimensional DFTs. Note that the vacuum modes in Circuit~2 are not drawn explicitly.}
    \label{fig:Paesani-appendix}
\end{figure*}

In this section, we adapt the Bell state generation circuits considered in the Supplemental Material of \cite{Paesani} into Type-II fusion circuits. 
These circuits are presented in Fig.~\ref{fig:Paesani-appendix}. Circuits~(1) and~(2) involve the application of a parameterized gate $U_R$, defined as:
\[
   U_R = \begin{pmatrix} \sqrt{R} & i\sqrt{1 - R} \\ 
                i\sqrt{1 - R} & \sqrt{R}
\end{pmatrix}
\]
We simulated the performance of these circuits, in the case $d=3$, for 100 different values of $R$, uniformly spaced between 0 and 1 with an interval of 0.01. In all cases, the highest fusion success probability was numerically obtained at $R = 0.5$, which differs from the optimal value for the state generation circuits, where the best performance was achieved at $R = 2/3$.

In Circuit~1, the unitary operation $H_{2d}$ represents a variant of $2d$-dimensional Fourier transform: we call the three cases $U_A, U_B, U_C$, as depicted in the figure. 
While these implementations exhibited different performance levels during state generation, they numerically yielded identical success probabilities in the corresponding Type-II fusion operations. 

We note that, to leverage the structure of the state generation protocol, the ancillary state should be a superposition of the post-selection patterns used in the corresponding state generation. 
In the $d=3$ case, we numerically tested various evenly-weighted superpositions of these patterns, finding that the ancillary state 
\begin{equation}
    \frac{1}{\sqrt{2}}\left(\ket{200020} + \ket{020200}\right)
\end{equation}
led to optimal results for Circuit 1,
and 
\begin{equation}
    \frac{1}{\sqrt{2}}\left(\ket{000202} + \ket{002020}\right)
\end{equation}
led to optimal results for Circuit 2. 
(We note that many choices of ancilla lead to the same success probability, and which choices of ancilla are viable do sometimes depend on the circuit and choice of $H_{2d}$.) 
A summary of our results for the $d = 3$ case, using those ancillary states, is provided in Table~\ref{table:paesani-app}. 
The success probabilities did not surpass those of the ZTL protocol derived from the same work \cite{Paesani} and discussed in the main text, Sec.~\ref{sec:paesani}. 

\section{Boosted high-dimensional Type-II fusion}\label{sec:boosted_high_dim} 
In this section, we present a boosted high-dimensional Type-II fusion protocol combining the methods of ~\cite{Ewert_2014} and Section~\ref{sec:luo}. 
This protocol will require a great number of ancillary photons for a very small increase in success probability; however, we present it here due to interesting theoretical features. 
It may be viewed as an extension of the protocol of Section~\ref{sec:luo}: the projectors of that case still occur with the same probabilities, but we convert some ``failure'' outcomes into successful entangling measurements. 
In fact, these new outcomes may be viewed as projecting onto the $d$-qudit \emph{GHZ state} rather than a $2$-body Bell state. 

\begin{figure}[htb]
    \centering
    \includegraphics[width=\linewidth]{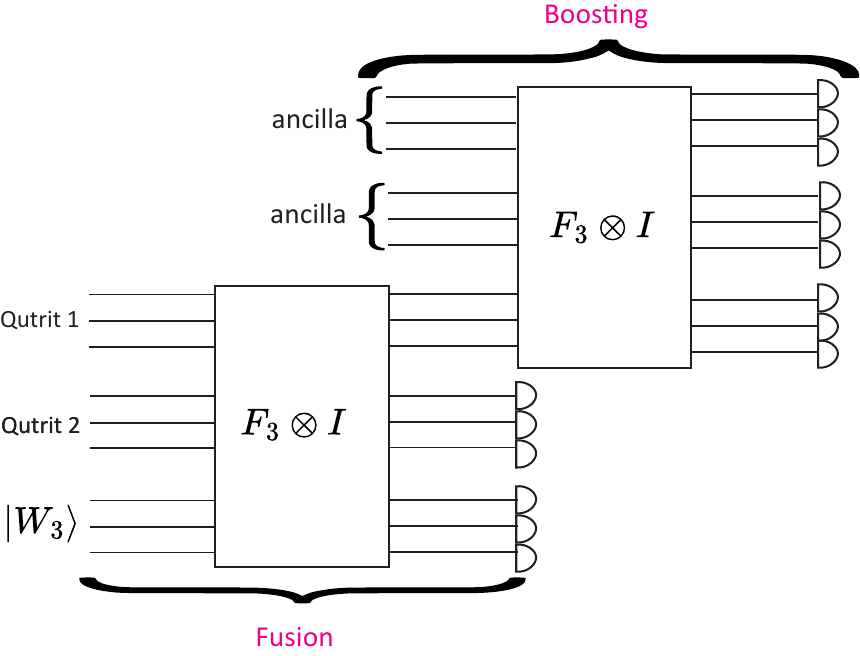}
    \caption{An extension of the qubit boosting protocol of \cite{Ewert_2014} to the qutrit ($d=3$) case. 
    The ``fusion'' part of the circuit follows \citet{Luo} (see Section~\ref{sec:luo original}) and uses a $W$ state ancilla. 
    One output port is then fed into the ``boosting'' circuit, undergoing another Fourier transform along with new ancillary states $\frac{1}{\sqrt{3}}\left(\ket{300}+\ket{030}+\ket{003}\right)$. 
    }
    \label{fig:qutrit_boosting}
\end{figure}

The circuit is depicted for $d=3$ in Fig.~\ref{fig:qutrit_boosting}. 
We start with the circuit of Fig.~\ref{fig:luo-fusion}, then, on any number of output ports (for ease of illustration, only the $0$th), we allow for additional interference. 
The circuit of Fig.~\ref{fig:boosted_type-II} \cite{Ewert_2014} applies Hadamard transformations\textemdash two-body Fourier transforms\textemdash between corresponding modes of an output port and an ancillary state of $2$ photons in $2$ modes, $\frac{1}{\sqrt{2}}\left(\ket{20}+\ket{02}\right)$. 
We straightforwardly extend this, using an interferometer of the form $F_d\otimes I$ to allow for interference between the chosen output port and $d-1$ ancillary states of $d$ photons in $d$ modes, namely 

\begin{equation}\label{eq:appendix ancilla}
    \frac{1}{\sqrt{d}}\left(\ket{d00\cdots 0}+\ket{0d0\cdots 0}+\dots \ket{000\cdots d}\right).
\end{equation}
The key feature of this ancilla is that, for every term, the number of photons in each mode is a multiple of $d$. 
This ensures that the success probability will not be decreased: extending the notion of ``time bin'' to the modes of the ancillary states as well, the total number of photons in each of the $d$ time bins (modulo $d$) remains unchanged. 
Then the ``success'' patterns in the protocols of Section~\ref{sec:luo} remain distinguishable from one another (and from failure patterns) in this setting. 
Our goal is to convert some failure patterns into successes. 
As in \cite{Ewert_2014}, this will only be possible for terms in which all $d$ of the original photons exit through the same port of the initial Fourier interferometer $F_d\otimes I$. 
We specifically target ``bunched'' patterns in which all $d$ original photons are sent to the same mode of the $0$th port. The protocol is given here: 

\begin{protocol}\label{proto:appendix}
    Let $d\geq 2$ be an integer. We give a protocol for boosted Type-II fusion of $d$-dimensional qudits as follows. We boost only the $0$th port, but a similar procedure may be applied to other ports. 
    \begin{enumerate}
        \item Begin with two arbitrary ``input'' qudits (those to be fused) and $d-2$ ancillary W-state qudits. 
        \item Input these $d$ qudits into a $F_d\otimes I$ interferometer, as in the Fourier projection (Protocol~\ref{proto:fourier}). 
        \item For the first $d$ modes, making up the $0$th port, feed their output into a \emph{second} Fourier interferometer $F_d\otimes I$, along with $d-1$ ancillary states of the form \eqref{eq:appendix ancilla}. 
        \item Perform PNRD on all modes (of which there are $d^2 + (d^2-d) = d(2d-1)$). Let $n_i$ be the number of photons in modes indexed by $i$ mod $d$ (in other words, $n_i$ is the number of photons in ``time bin'' $i$). 
        \item Interpret the measurement results: 
        \begin{enumerate}
            \item If all $n_i\equiv 1\mod d$, this uniquely determines an ``unboosted'' success pattern; we then proceed as in Protocol~\ref{proto:fourier}. 
            \item Suppose that the final $d(d-1)$ modes are empty (so that all photons are in the $d^2$ ``boosting'' modes), and all $n_i=d$ (so that each time bin receives the same number of photons). 
            For $0\leq j\leq d-1$, let $m_j$ be the number of photons in boosting modes $jd, jd+1, \dots, jd+d-1$. 
            If we further have $\sum_{j=0}^{d-1} j m_j \equiv 0\mod d$, this leads to a ``boosted'' success. 
            The resulting projection is onto a maximally entangled two-qudit state with no need for extra-dimensional corrections. 
            \item If neither of the previous parts imply success, then the protocol has failed. 
        \end{enumerate}
    \end{enumerate}
\end{protocol}

We briefly discuss numerics for the $d=3$ case before giving intuition for the broader scheme. 
We may perform this boosting operation on any subset of ports; 
for each relevant port, we use $6$ new ancillary photons (now in nontrivial entangled states) to increase the overall success probability by approximately $1.83\times 10^{-3}$. 
Carrying out this procedure on all $3$ ports requires $18$ ancillary photons (in addition to the usual $W$ state ancilla) and provides a total success probability of $0.117$, a small increase over the ``unboosted'' success probability $1/9\approx 0.111$. 
This is still significantly worse than the probability of $1/6\approx 0.167$ we obtain by using Protocol~\ref{proto:odd}, which requires only $2$ ancillary photons. 

We now return to the general case, starting with the original circuit marked ``Fusion'' in Fig.~\ref{fig:qutrit_boosting}. 
Note that, if we apply $F_d\otimes I$ and post-select for terms with all photons in the $0$th port, we obtain
\begin{align}
    \ketb{k,k,\dots, k}\mapsto \ket{d_k}\ket{0}^{\otimes d(d-1)}
\end{align}
with probability 
\begin{equation}
    c = \bra{d,0,\dots, 0}F_d \ket{1, 1, \dots, 1} = \sqrt{\dfrac{d!}{d^d}},
\end{equation}
where $\ket{d_k}$ is the $d$-mode bunched state with $d$ photons in the $k$th mode. 
In particular, under the same operation, we have 
\begin{equation}\label{eq:appendix bunched}
    \frac{1}{\sqrt{d}}\sum_{k=0}^{d-1}\ketb{k}^{\otimes d}\mapsto \left(\frac{1}{\sqrt{d}}\sum_{k=0}^{d-1} \ket{d_k}\right)\ket{0}^{\otimes d(d-1)}
    \end{equation}
with the same probability $c$. 
Then we can obtain a projection onto the $d$-qudit GHZ state if we can design the boosting circuit to project onto the uniform superposition of the $\ket{d_k}$. 
This is done by using the ancillary states \eqref{eq:appendix ancilla}, applying another Fourier transform $F_d\otimes I$, and post-selecting according to the rules in Protocol~\ref{proto:appendix}. 
The condition that all photons go to the $0$th port makes the above calculation relevant. 
The condition that all $n_i=d$ forces a projection onto some superposition of the $\ket{d_k}$ (and intuitively encourages uniform superpositions over non-uniform ones). 
Finally, the condition $\sum_{j=0}^{d-1} j m_j \equiv 0\mod d$ is related to the zero-transmission law \cite{tichy2010zero} and uses the symmetries of the ideal input state \cite{dittel2018totally, saied2025general} to obtain the desired projection. 
We conjecture that the probability of the state \eqref{eq:appendix bunched} is precisely the scalar $c^2$ above. 
Putting this all together, and accounting for a factor of $1/d^{d}$ for projecting the GHZ state to a random two-qudit state with $W$-state ancillae, the \emph{increase} in success probability given by Protocol~\ref{proto:appendix} is then conjectured to be
\begin{equation}
    d!^2/d^{3d}.
\end{equation}
We have verified this numerically in the case $d=3$ and expect that the general case is similar. 
This requires $O(d^2)$ additional photons to increase the success probability by an amount that exponentially decays with $d$. 
Thus, we expect Protocol~\ref{proto:appendix} is not practically applicable, especially in the presence of errors such as photon loss. 
However, we discuss it here since it is a natural generalization of the works of \cite{Luo, Ewert_2014} that others may have been interested in investigating. 
Further, the appearance of the zero-transmission law \cite{tichy2010zero} in the success patterns is theoretically interesting and may inspire more efficient Type-II fusion protocols in the future. 

\bibliography{refs} 
\end{document}